\definecolor{zx_grey}{RGB}{211,211,211}
\newcommand{\interp}[1]{\left\llbracket#1\right\rrbracket}
\newcommand{\N}{\mathbb{N}}
\newcommand{\df}{\stackrel{\scriptscriptstyle def}{=}}
\title{Large-scale quantum diagrammatic reasoning tools\\ \vspace{0.25cm} \Large !-boxes vs. scalable notations}
\author{Titouan Carette \& Louis Lemonnier
	\institute{Université Paris-Saclay, Inria, ENS Paris-Saclay, CNRS, Laboratoire Méthodes Formelles, 91190, Gif-sur-Yvette, France}
	\email{name.surname@universite-paris-saclay.fr}
}
\tikzstyle{gn}=[fill=green, draw=black, shape=circle, tikzit category=ZX, tikzit fill=green, tikzit draw=black, tikzit shape=circle, inner sep=0.1em]
\tikzstyle{rn}=[fill=red, draw=black, shape=circle, tikzit fill=red, tikzit draw=black, tikzit category=ZX, tikzit shape=circle, inner sep=0.1em]
\tikzstyle{divide}=[regular polygon, regular polygon sides=3, shape border rotate=90, draw=black, fill={zx_grey}, inner sep=1.5pt, tikzit category=scal, rounded corners=0.8mm]
\tikzstyle{black}=[fill=black, draw=black, shape=circle, tikzit fill=black, tikzit draw=black, tikzit shape=circle, tikzit category=IH, inner sep=2pt]
\tikzstyle{gather}=[fill={zx_grey}, draw=black, tikzit category=scal, rounded corners=0.8mm, regular polygon, regular polygon sides=3, shape border rotate=-90, inner sep=1.5pt]
\tikzstyle{ggen}=[fill=white, draw=black, shape=rectangle, rounded corners=2mm, line width=1pt, tikzit draw=red, tikzit category=scal]
\tikzstyle{white}=[fill=white, draw=black, shape=circle, inner sep=2pt, tikzit category=IH]
\tikzstyle{mbox}=[fill=white, draw=black, rounded rectangle, rounded rectangle west arc=none, tikzit category=scal, tikzit shape=rectangle]
\tikzstyle{A}=[fill=white, shape=circle, tikzit category=scal, inner sep=1pt]
\tikzstyle{ggreen}=[fill=green, draw=black, shape=circle, tikzit category=SZX, tikzit fill=green, tikzit draw=black, line width=1pt, inner sep=0.1em]
\tikzstyle{gred}=[fill=red, draw=black, shape=circle, rounded corners=2mm, tikzit category=SZX, inner sep=0.1em, tikzit fill=red, line width=1pt]
\tikzstyle{ghad}=[minimum size=3mm, font={\scriptsize\boldmath}, shape=rectangle, inner sep=1mm, line width=1pt, outer sep=-1.5mm, scale=0.8, tikzit shape=rectangle, draw=black, fill=yellow, tikzit draw=blue]
\tikzstyle{boxm}=[fill=white, draw=black, rounded rectangle, tikzit category=scal, tikzit shape=rectangle, rounded rectangle east arc=none]
\tikzstyle{box}=[fill=white, draw=black, shape=rectangle]
\tikzstyle{had}=[fill=yellow, draw=black, shape=rectangle, tikzit category=ZX, tikzit fill=yellow, tikzit draw=black, inner sep=2.5pt, font={\scriptsize\boldmath}]
\tikzstyle{gwhite}=[fill=white, draw=black, shape=circle, tikzit fill=white, tikzit shape=circle, line width=1 pt, inner sep=2 pt, tikzit draw=red]
\tikzstyle{gblack}=[fill=black, draw=black, shape=circle, tikzit fill=black, tikzit shape=circle, line width=1 pt, inner sep=2 pt, tikzit draw=red]
\tikzstyle{antipode}=[fill=red, draw=black, shape=rectangle, tikzit fill=red, tikzit draw=black, tikzit shape=rectangle, inner sep=2pt]
\tikzstyle{diamond}=[fill=white, draw=black, shape=diamond, inner sep=2pt]
\tikzstyle{mongr}=[fill=green, draw=green, shape=circle, inner sep=2pt]
\tikzstyle{monbl}=[fill=blue, draw=blue, shape=circle, inner sep=2pt]
\tikzstyle{bg}=[inner sep=0.7mm, minimum width=0pt, minimum height=0pt, fill=green, draw=white, very thick, shape=circle]
\tikzstyle{br}=[inner sep=0.7mm, minimum width=0pt, minimum height=0pt, fill=red, draw=white, very thick, shape=circle]
\tikzstyle{rmat}=[draw, signal, fill=red, signal to=east, signal from=west, inner sep=1pt, minimum height=6pt]
\tikzstyle{lmat}=[draw, signal, fill=red, signal to=west, signal from=east, inner sep=1pt, minimum height=6pt]
\tikzstyle{umat}=[draw, signal, fill=red, signal to=north, signal from=south, inner sep=1pt, minimum width=6pt]
\tikzstyle{dmat}=[draw, signal, fill=red, signal to=south, signal from=north, inner sep=1pt, minimum width=6pt]
\tikzstyle{box}=[shape=rectangle, text height=1.5ex, text depth=0.25ex, yshift=0.5mm, fill=white, draw=black, minimum height=5mm, yshift=-0.5mm, minimum width=5mm, font={\small}]
\tikzstyle{Z dot}=[inner sep=0mm, minimum size=2mm, shape=circle, draw=black, fill={rgb,255: red,160; green,255; blue,160}]
\tikzstyle{gdot}=[minimum size=3mm, font={\scriptsize\boldmath}, shape=rectangle, rounded corners=1.3mm, inner sep=1mm, outer sep=-1.8mm, scale=0.8, tikzit shape=circle, draw=black, fill=green, tikzit draw=blue]
\tikzstyle{X dot}=[Z dot, shape=circle, draw=black, fill={rgb,255: red,220; green,0; blue,0}]
\tikzstyle{rdot}=[minimum size=3mm, font={\scriptsize\boldmath}, shape=rectangle, rounded corners=1.3mm, inner sep=1mm, outer sep=-1.8mm, scale=0.8, tikzit shape=circle, draw=black, fill=red, tikzit draw=blue]
\tikzstyle{grdot}=[minimum size=3mm, font={\scriptsize\boldmath}, shape=rectangle, rounded corners=1.3mm, inner sep=1mm, line width=1pt, outer sep=-1.5mm, scale=0.8, tikzit shape=circle, draw=black, fill=red, tikzit draw=blue]
\tikzstyle{ggdot}=[minimum size=3mm, font={\scriptsize\boldmath}, shape=rectangle, line width=1pt, rounded corners=1.3mm, inner sep=1mm, outer sep=-1.5mm, scale=0.8, tikzit shape=circle, draw=black, fill=green, tikzit draw=blue]
\tikzstyle{arrow}=[-->]
\tikzstyle{rfarr}=[draw, signal, fill=black, signal to=east, signal from=west, inner sep=1pt, minimum height=6pt]
\tikzstyle{lfarr}=[draw, signal, fill=black, signal to=west, signal from=east, inner sep=1pt, minimum height=6pt]
\tikzstyle{ufarr}=[draw, signal, fill=black, signal to=north, signal from=south, inner sep=1pt, minimum width=6pt]
\tikzstyle{dfarr}=[draw, signal, fill=black, signal to=south, signal from=north, inner sep=1pt, minimum width=6pt]
\tikzstyle{ry}=[draw, signal, fill=yellow, signal to=east, signal from=west, inner sep=1pt, minimum height=6pt]
\tikzstyle{ly}=[draw, signal, fill=yellow, signal to=west, signal from=east, inner sep=1pt, minimum height=6pt]
\tikzstyle{uy}=[draw, signal, fill=yellow, signal to=north, signal from=south, inner sep=1pt, minimum width=6pt]
\tikzstyle{dy}=[draw, signal, fill=yellow, signal to=south, signal from=north, inner sep=1pt, minimum width=6pt]
\tikzstyle{arrow}=[->]
\tikzstyle{very thick}=[-, line width=1pt, tikzit draw=red]
\tikzstyle{pointille}=[dashed, -]
\tikzstyle{red}=[-, draw=red]
\tikzstyle{blue}=[-, draw=blue]
\tikzstyle{green}=[-, draw=green]
\tikzstyle{strike}=[-, tikzit draw={rgb,255: red,191; green,0; blue,64}, strike through]
\tikzstyle{strike'}=[-, tikzit draw=cyan, strike bend]
\tikzstyle{dashed arrow}=[->, tikzit draw=green, draw=black, dashed]
\tikzstyle{reprise}=[-, line width=2pt, tikzit draw={rgb,255: red,255; green,191; blue,191}]
\tikzstyle{greyfill}=[-, fill={rgb,255: red,191; green,191; blue,191}]
\newcommand{\ket}[1]{|#1\rangle}
\newcommand{\bra}[1]{\langle #1|}
\newcommand{\ketbra}[1]{\ket{#1}\bra{#1}}
\newcommand{\sem}[1]{\left\llbracket #1\right\rrbracket}
\newtheorem{lemma}{Lemma}
\newtheorem{theorem}{Theorem}
\begin{document}
	\maketitle
	
	\begin{abstract}
		The application of diagrammatic reasoning techniques to large-scale quantum processes needs specific tools to describe families of diagrams of arbitrary size. For now, large-scale diagrammatic reasoning tools in ZH-calculus come in two flavours, !-boxes and scalable notations. This paper investigates the interactions between the two approaches by exhibiting correspondences through various examples from the literature, focusing on (hyper)graph states and diagrammatic transforms. In doing so, we set up a path toward a neat and tidy large-scale diagrammatic reasoning toolbox.
	\end{abstract}
	
	\section*{Introduction}
	
	\textbf{Motivation:} One of the most used models of quantum computation is quantum circuits; a circuit is a sequence of quantum logic gates applied
	to a list of qubits. It is a useful framework for writing
	quantum algorithms. However, it is challenging to rewrite circuits or
	check if two of them perform the same operation. Among the methods to reason on circuits, ZX-calculus has risen from categorical theories \cite{coecke2011interacting} to provide
	an intuitive and efficient framework for quantum computation. ZX-calculus
	contains an equational theory which allows to rewrite diagrams
	and enables smart circuit optimisations \cite{kissinger2019reducing,pyzx,backens2021extraction} and verification
	\cite{lemonnier2021hypergraph} schemes. A set of rules for ZX-calculus have been
	proven complete for Clifford+T quantum operations \cite{jeandel2018complete}.
	Its cousin ZH-calculus specialises in nonlinearity \cite{backens2018zh}, but contains some useful
	aspects \cite{kuijpers2019graphical,lemonnier2021hypergraph} 
	that are hard to transpose in ZX-calculus; as it
	carries a native representation of Toffoli and CCZ gates,
	it holds a multiplicative behaviour that is tough to
	picture in ZX-calculus. \\
	As quantum circuits become arbitrarily large, reasoning on them is harder since the different rules may not scale, for example, in ZH-calculus. To overcome this,
	large-scale diagrammatic tools have been developed, such as
	!-boxes \cite{merry2014phd,quick2015firstorder} or scalable calculus \cite{carette2019szx}.
	These two might look very different: the first keeps
	the syntax of the base graphical calculus where some subdiagrams
	can be repeated any number of times to model a family of diagrams;
	the other adds new generators to picture any number of wires as
	a single one. Our goal is to show that the two are very much alike
	and that they have a similar expressivity. A translation between them
	also allows to write diagrams using both, and would give
	a new understanding of different works done in one or the other.
	
	\textbf{Contribution:} In this paper, we compare and translate !-boxes and scalable notations on several relevant examples. First, we consider the matrix arrows generators
	of scalable ZH-calculus. Translating theme into !-boxes gives a practical translation between the frameworks and allows direct diagrammatical proofs of arrows properties dramatically shorter than the proofs by induction of \cite{carette2019szx}. Then, the transformations expressed in \cite{lemonnier2021hypergraph} using !-boxes, namely hyper-local complementation and Fourier hyper pivot, are here pictured in scalable notations. We provide concise, scalable proofs of those transformations. However, in doing so we lose the readability of the topological structure allowed by !-boxes. Finally, by exhibiting the connection between matrix arrows and the trapezes of \cite{kuijpers2019graphical}, we show that scalable notations are also well-suited for graphical transforms. Relying on this, we give alternative proofs of the spider nest identities of \cite{munsonand,de2020fast} by direct transform computation.
	
	\textbf{Structure of the paper:} We first ZH-calculus is introduced in its well-tempered form \cite{de2020well} in Section \ref{sec:zh}. Once both frameworks, namely !-boxes and SZH-calculus, are defined in Section \ref{sec:large}, we describe heuristics to translate between the two in section \ref{sec:dico}. We then give concrete examples of applications of those heuristics to the hyper local complementation and regular hyper pivot \cite{lemonnier2021hypergraph}. Finally, in Section \ref{sec:transform}, we introduce graphical transforms in SZH-calculus and use them to derive the Fourier hyper pivot \cite{lemonnier2021hypergraph} and spider nest identities \cite{munsonand,de2020fast}.
	
	\section{ZH-calculus}\label{sec:zh}
	
	The ZH-calculus is a graphical language introduced by Backens and Kissinger~\cite{backens2018zh}
	to represent linear maps in a complex vector space through diagrams -- which will will be 
	referred to as \textit{ZH-diagrams}. A ZH-diagram is a collection of generators and wires
	between them. When $a$ is a complex number, the generators are:
	$$\tikzfig{figures/h-basic}\qquad\tikzfig{figures/z-basic}$$
	In colloquial speech, these can be called 
	respectively \textit{h-box} and \textit{green dot}, or more vaguely \textit{spiders}.
	The generators are given a semantics in complex matrices spaces, which
	is in itself compositional and provides then a semantics for any ZH-diagram.
	We consider in this paper the well-tempered~\cite{de2021well} semantics:
	$$\sem{\tikzfig{figures/z-basic}}=2^{\frac{m+n-2}{4}}\left(\ket 0^{\otimes n}\bra 0^{\otimes m}
	+\ket 1^{\otimes n}\bra 1^{\otimes m}\right) \quad
	\sem{\tikzfig{figures/h-basic}}=2^{\frac{-m-n}{4}}\sum_{x,y\in\{0,1\}^{m+n}}
	a^{x_1\dots x_my_1\dots y_n}\ket y\bra x$$
	Diagrams can be composed horizontally or vertically, meaning:
	$$\sem{\tikzfig{figures/diag-tensor}}
	= \sem{\tikzfig{figures/diag}}\otimes \sem{\tikzfig{figures/diag-prime}}
	\qquad\qquad 
	\sem{\tikzfig{figures/diag-composition}}
	=\sem{\tikzfig{figures/diag}} \circ \sem{\tikzfig{figures/diag-prime}}$$
	The tensor product is symmetric, meaning that we allow swapping wires, some
	natural equations in the diagrammatic calculus arise:
	$$\tikzfig{figures/eq-swap}$$
	Loose ends on the left and on the right are respectively inputs and outputs
	of the diagram. 
	It follows that a single wire with no generator performs
	the identity. Besides, we allow wires to bend:
	$$\sem{\tikzfig{figures/id}} = \ketbra 0 + \ketbra 1 \qquad
	\sem{\tikzfig{figures/cup}} = \ket{00}+\ket{11} \qquad
	\sem{\tikzfig{figures/cap}} = \bra{00}+\bra{11}$$
	The choice of well-tempered scalars is significant: some equations stated
	in the following sections have application in the verification of 
	quantum circuits, for example, hence the necessity to keep
	the scalars -- which is not always the case in graphical calculi. Another useful spider is the X-spider -- or \textit{red dot} --
	which is a generator from ZX-calculus and can be defined as a ZH diagram below.
	An empty h-box is actually an h-box with phase $-1$, such that a two-wired
	empty h-box pictures exactly the Hadamard gate.
	The diagrammatic equivalent of the $\texttt{NOT}$ gate is defined below.
	$$\tikzfig{figures/def-x-basic}\qquad\qquad\tikzfig{figures/def-not-basic}$$
	
	Diagrams are drawn until now with wires from left to right to easily understand
	them as inputs and outputs w.r.t. the semantics, but they can as well be written
	from bottom to top. Besides, only topology matters:
	$$\tikzfig{figures/bend-on-z}$$
	\begin{figure}
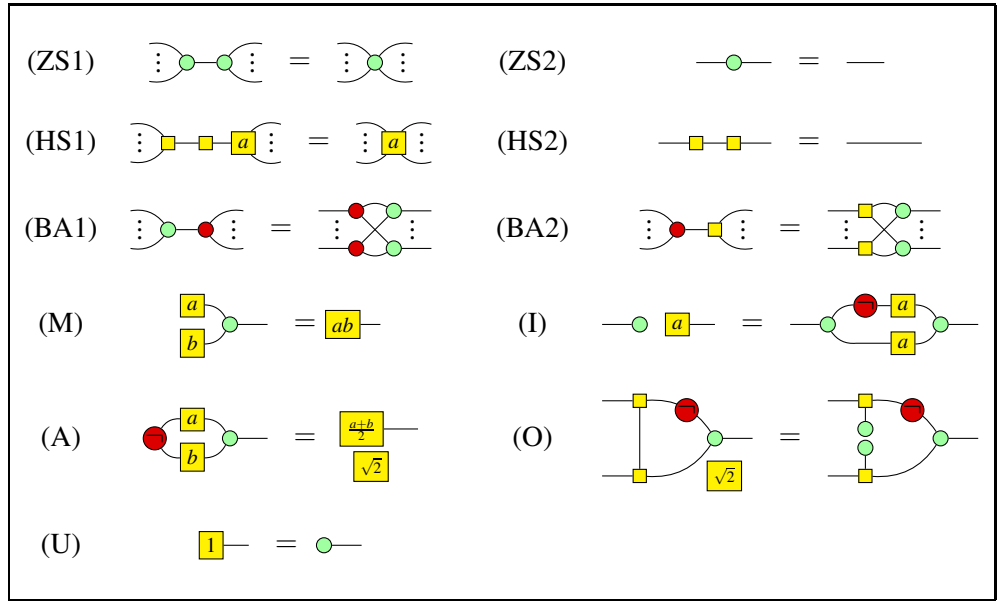

		\centering
		\begin{tabular}{|ccccc|}
			\hline
			&&&& \\
			(ZS1) & $\tikzfig{figures/zs1}$ & \qquad
			& (ZS2) & $\tikzfig{figures/zs2}$ \\ &&&& \\
			(HS1) & $\tikzfig{figures/hs1}$ & \qquad
			& (HS2) & $\tikzfig{figures/hs2}$ \\ &&&& \\
			(BA1) & $\tikzfig{figures/ba1}$ & \qquad
			& (BA2) & $\tikzfig{figures/ba2}$ \\ &&&& \\
			(M) & $\tikzfig{figures/m-axiom}$ & \qquad
			& (I) & $\tikzfig{figures/i-axiom}$ \\ &&&& \\
			(A) & $\tikzfig{figures/a-axiom}$ & \qquad
			& (O) & $\tikzfig{figures/o-axiom}$ \\ &&&& \\
			(U) & $\tikzfig{figures/u-axiom}$ &&& \\
			&&&& \\
			\hline
		\end{tabular}
		\caption{Rules of the ZH-calculus \cite{backens2018zh}}\label{fig:ZH-rules}
	\end{figure}
	
	ZH-calculus comes with a set of rules (see Fig.~\ref{fig:ZH-rules}) that is \textit{complete}, meaning that if two ZH-diagrams represent the same linear map, then those diagrams can be transformed into one another using these rules. There is no limit to how big a ZH-diagram can be; this paper takes a particular interest in two large-scale diagrammatic tools that allow to represent and manipulate unboundedly large diagrams.
	
	\section{Large-scale methods}\label{sec:large}
	
	This section introduces separately two large-scale diagrammatic tools, 
	namely !-boxes and scalable notations; to study and compare
	them in the oncoming sections.
	
	\subsection{!-boxes}
	
	Many equations of diagrams benefit from being written with notation
	that capture large-scale designs. Already some equations in Fig.\ref{fig:ZH-rules}
	need dots "$\dots$" to fully picture their behavior. However the semantics of 
	the dots is not well defined and their expression capability is very limited;
	they may as well become confusing. !-boxes -- read "bang-boxes" -- are an attempt at picturing large-scale diagrams more thoroughly. A !-box circling a part of a ZH-diagram means that this subdiagram	can be replicated any number of times; a diagram containing a !-box represents then a family of diagrams, as follows:
	$$\tikzfig{figures/def-bang}\quad =\quad 
	\left\{\quad\tikzfig{figures/def-bang-0}\quad ;\quad
	\tikzfig{figures/def-bang-1}\quad ;\quad
	\tikzfig{figures/def-bang-2}\quad ;\quad
	\tikzfig{figures/def-bang-3}\quad ;\quad \dots\quad
	\right\}$$
	If one requires to specify a diagram from the family, the !-box
	can be annotated, and we get the diagram on the left below.
	!-boxes can also be overlapped, resulting in a fully-connected bipartite graph on the right:
	$$\tikzfig{figures/def-bang-anno}
	\qquad\qquad\tikzfig{figures/def-bang-overl}$$
	A particular equation that requires graphical Fourier transform~\cite{kuijpers2019graphical}
	is pictured through !-boxes annotated with sets and parameters
	that cover the sets, like in the upcoming example.
	From now on, for any natural number $n$ we will write $[n]$ to denote the set 
	$\{0,\dots,n-1\}$.
	$$\tikzfig{figures/def-trapeze-bang}\qquad\text{with}\qquad\tikzfig{figures/def-trapeze}$$
	We will also use gray trapezes:	$\qquad\tikzfig{figures/def-trapeze2}$.\\
	
	As an example of use, some rules of Fig.\ref{fig:ZH-rules} are better written with !-boxes:
	\begin{align*}
		\tikzfig{figures/zs1-bang} \qquad
		\tikzfig{figures/hs1-bang} \\
		\tikzfig{figures/ba1-bang} \qquad
		\tikzfig{figures/ba2-bang}
	\end{align*}
	
	\subsection{Scalable notations}
	
	The scalable notation have first been introduced in \cite{chancellor2016graphical} before being formalised in \cite{carette2019szx}. As opposed to !-boxes, the idea is to stay inside the prop framework by introducing new kinds of wires and generators representing large-scale processes.
	
	\subsubsection{SZH}
	
	In SZH, for each $n\in \mathbb{N}^* $, where $\mathbb{N}^*$ are the natural integers without zero, we have a type of wire of size $n$ denoted $(n)$. Formally the corresponding categorical structure will then be a $\mathbb{N}^*$-colored prop and the generic types are of the form $(n_1) +\cdots +(n_i )+ \cdots + (n_m )$. The \textbf{size} of a type is inductively defined as: $|(n)|\df n$ and $|a + b|\df|a|+ |b|$. We use thin wires to denote wires of type $(1)$ and thick wires denote any $(n)$. We write $(m)^n$ for the tensor product of $n$ wires $(m)$ with the convention $(m)^0=(0)$. We have two new families of generators, the dividers and the gatherers, for each positive integer $n$ satisfying the following equations:
	
	\begin{center}
		$\tikzfig{dd}$ \scalebox{0.8}{$: (n+1)\to (1)+ (n)$} $\qquad$ $\tikzfig{gg}$\scalebox{0.8}{$: (1)+ (n)\to (n+1)$} $\qquad$ $\tikzfig{ex0}=\tikzfig{ex1}\qquad\tikzfig{el0}=\tikzfig{el1}$
	\end{center}
	
	For every generator $n\to m$ of ZH and every $k\in \mathbb{N}^*$ there is a generator $(k)^n \to (k)^m$ in SZH. Those scaled generators satisfy the same equations as the original ones:
	
	\begin{center}
		$\tikzfig{z}:(k)^n\to (k)^m \qquad \tikzfig{h}:(k)^n\to (k)^m$
	\end{center}
	
	They interacting with dividers and gatherers as:
	
	\begin{center}
		$\tikzfig{dig0}=\tikzfig{dig1}\qquad\tikzfig{dih0}=\tikzfig{dih1}$
	\end{center}
	
	Note that here $\mathbf{a}\in \mathbb{C}^k $ is a list of $k$ complex parameters .$\mathbf{a}_0 $ is the head of $\mathbf{a}$ and $\mathbf{a}' $ its tail. We can generalize the following rules by taking addition and multiplication of lists pointwise:
	
	\begin{center}
		\tikzfig{genrules}
	\end{center}
	
	It is shown in \cite{carette2020colored} that any diagram in SZH can be rewritten into normal form.
	
	\begin{lemma}[Normal form]
		For each $f:a\to b$ in SZH there is a unique $|f|:|a|\to |b|$ in ZH such that:
		
		\begin{center}
			\tikzfig{normalform}
		\end{center}
		
	\end{lemma}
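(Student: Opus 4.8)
The plan is to produce $|f|$ explicitly by sandwiching $f$ between maximal dividers and gatherers, and then to read off both existence and uniqueness from the fact that these two families are mutually inverse. Throughout I write composition left-to-right, so that $\varphi\,\psi$ means ``first $\varphi$, then $\psi$''. First I would fix, for every type $a=(n_1)+\cdots+(n_m)$, a \emph{full divider} $\delta_a:a\to(1)^{|a|}$ obtained by iterating the elementary divider $(n+1)\to(1)+(n)$ on each summand until no thick wire remains, and symmetrically a \emph{full gatherer} $\gamma_a:(1)^{|a|}\to a$. The two displayed equations relating the elementary divider and gatherer say precisely that they are inverse to one another, so a short induction on $|a|$ gives $\delta_a\,\gamma_a=\mathrm{id}_a$ and $\gamma_a\,\delta_a=\mathrm{id}_{(1)^{|a|}}$; that is, $a$ and $(1)^{|a|}$ are isomorphic objects of the colored prop. I also use the identification of the thin-wire fragment of SZH (every wire of type $(1)$, with $(1)^n$ read as $n$) with ordinary ZH.

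For existence I would set $|f|\df\gamma_a\,f\,\delta_b:(1)^{|a|}\to(1)^{|b|}$, i.e. $f$ with a full gatherer plugged into its inputs and a full divider onto its outputs. All \emph{boundary} wires of this morphism are now thin, but internal thick wires and scaled spiders may remain; the real work is to show that the sandwiched diagram normalises to a genuine ZH diagram. This is done by pushing the dividers and gatherers through the generators via the interaction rules displayed above, which rewrite a scaled spider flanked by full dividers and gatherers into a parallel product of ordinary ZH spiders, thereby stripping one thick wire at a time. Iterating over the finite generator structure of $f$ eliminates every thick wire and leaves an honest ZH diagram. The defining identity is then purely formal: using invertibility, $\delta_a\,|f|\,\gamma_b=\delta_a\,\gamma_a\,f\,\delta_b\,\gamma_b=\mathrm{id}_a\,f\,\mathrm{id}_b=f$.

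Uniqueness is then immediate: if $g$ is any ZH diagram with $f=\delta_a\,g\,\gamma_b$, then precomposing with $\gamma_a$ and postcomposing with $\delta_b$ and cancelling gives $\gamma_a\,f\,\delta_b=\gamma_a\,\delta_a\,g\,\gamma_b\,\delta_b=g$, whence $g=|f|$. The step I expect to be the genuine obstacle is the normalisation underlying existence, namely the guarantee that the interaction rules suffice to remove \emph{every} thick wire no matter how $f$ is built. Making this precise is cleaner if one does not argue diagram-by-diagram but instead defines a flattening functor $|{\cdot}|$ from SZH to ZH on generators and checks that it respects all the SZH equations; this is exactly the content of the normal-form result of \cite{carette2020colored}, which I would invoke at this point. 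Granting that, the remaining claims reduce to bookkeeping with the divider/gatherer isomorphisms.
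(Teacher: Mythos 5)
Your proposal is, in substance, aligned with what the paper does, because the paper does not prove this lemma at all: it imports it wholesale from \cite{carette2020colored} (``It is shown in \cite{carette2020colored} that any diagram in SZH can be rewritten into normal form''). Your scaffolding --- full dividers and gatherers as mutually inverse isomorphisms between $a$ and $(1)^{|a|}$, the sandwich $|f| \df \gamma_a\, f\, \delta_b$, and the identification of the normalisation step (eliminating every internal thick wire by pushing dividers and gatherers through generators) as the only genuinely hard point --- is the natural skeleton around that result, and you correctly recognise that this hard point is exactly what the cited reference supplies rather than something the divider/gatherer laws alone can deliver. So far there is nothing to object to.

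There is, however, one genuine gap in the step you label ``immediate''. Your uniqueness argument cancels $\gamma$'s against $\delta$'s \emph{inside SZH}, so what it actually proves is that $g$ and $|f|$ are equal as SZH morphisms, i.e., modulo the SZH equational theory. The lemma asserts uniqueness \emph{in ZH}: $g = |f|$ modulo the ZH rules only. These two statements coincide precisely when the embedding $\iota : ZH \to SZH$ (reading $n$ as $(1)^n$) is faithful, and that is not free: a priori, the extra generators and equations of SZH could identify thin-wire diagrams that ZH itself cannot prove equal, which would make the normal form non-unique in the sense claimed. Faithfulness is in fact delivered by the very machinery you invoke for existence: once the flattening functor $|\cdot| : SZH \to ZH$ is defined on generators (sending dividers and gatherers to identities, restricting to the identity on thin-wire diagrams) and shown to respect all SZH equations, it is a retraction of $\iota$, and applying it to the equation $f = \delta_a\, g\, \gamma_b$ gives $g = |f|$ in ZH directly. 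Alternatively, one can interpret SZH generators semantically, check soundness of the SZH rules, and invoke completeness of ZH (Fig.~\ref{fig:ZH-rules}) to the same effect. Either way, you should route uniqueness through the functor (or through completeness) rather than through cancellation inside SZH; as written, that step silently assumes the conservativity of the extension, which is part of what \cite{carette2020colored} establishes, not a triviality.
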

	
	This allows to extend the size map into the \textbf{wire stripping} functor $|\_|:SZH\to ZH$. We can then extend the interpretation of ZH into an interpretation of SZH, $\interp{\_}:SZH\to \mathbf{FHilb}_2$, defined as $\interp{D}=\interp{|D|}$. We have $\interp{(n)}\df\mathbb{C}^{2^n }$ and $\interp{a + b}\df \interp{a}\otimes\interp{b}$. The wires have interpretations:
	
	\begin{center}
		$\interp{\tikzfig{n}}\df \sum\limits_{x\in \textbf{2}^n } \ket{x}\bra{x} \quad\interp{\tikzfig{swap}}\df \sum\limits_{x,y\in \textbf{2}^n } \ket{x}\ket{y}\bra{x}\bra{y}\quad\interp{\tikzfig{bcup}}\df \sum\limits_{x\in \textbf{2}^n}\ket{x}\ket{x}\quad\interp{\tikzfig{bcap}}\df \sum\limits_{x\in \textbf{2}^n}\bra{x}\bra{x}$.
	\end{center}
	
	The dividers and gatherers act trivially:
	$\interp{\tikzfig{dd}}\df \sum\limits_{x\in \textbf{2}^{n+1} } \ket{x}\bra{x}$ and $\interp{\tikzfig{gg}}\df \sum\limits_{x\in \textbf{2}^{n+1} } \ket{x}\bra{x}$.\\
	
	The arachnids of type $(k)^n\to (k)^m$ have interpretations:
	
	\begin{center}
		$\interp{\tikzfig{z}}\df 2^{k\frac{n+m-2}{4}}\sum\limits_{x\in {\bf 2}^k} e^{i(x\cdot a)}\ket{x}^{\otimes n}\!\bra{x}^{\otimes m}\qquad\interp{\tikzfig{h}}\df 2^{-k\frac{n+m}{4}}\sum\limits_{x_i\in {\bf 2}^k} \prod\limits_{j=1}^{k} a_j^{\bigwedge\limits_{i=1}^{n+m} x_{i,j}}\ket{x_1 \cdots x_n}\!\bra{x_{n+1} \cdots x_{n+m}}$
	\end{center}
	Where the $x_i$ are binary words of size $k$ and $x_{i,j}$ is the $j$-th bit of $x_i$.
	
	\subsubsection{Matrix arrows}
	
	The main interest of scalable notations for ZH is the manipulation of matrix arrows, new kinds of generators encapsulating bipartite graph diagrams that were already present in \cite{chancellor2016graphical}. There are two kinds, the red and the yellow ones:
	
	\begin{center}
		\tikzfig{defarrows}
	\end{center}
	
	Where $A\in \mathcal{M}_{m\times n}(\{0,1\})$ is a $\{0,1\}$-matrix corresponding to the bi-adjacency matrix of the green/red and green/yellow bipartite graphs respectively. In particular:
	
	\begin{center}
		\tikzfig{scalarmat}
	\end{center}
	
	By convention if the matrix is not specified then it is considered to be full of $1$. Arrows are copied and erased by green nodes and co-copied and co-erased respectively by red and yellow nodes:
	
	\begin{center}
		\tikzfig{copiesarrows}
	\end{center}
	
	They can be composed in the following ways:
	
	\begin{center}
		\tikzfig{composearrows}
	\end{center}
	
	Where $BA$ is the matrix product taken over the field $\mathbb{F}_2 $ and $B\cdot A $ is the matrix product taken over the boolean ring $\mathbb{B}$ with $+ = \lor$ and $\times = \land $. Note that this last product is, in practice, easier to compute than the first one. The interpretation is then:
	
	\begin{center}
		$\interp{\tikzfig{rarrow}}= \ket{x}\mapsto 2^{\frac{m-n}{4}}\ket{Ax}\qquad\interp{\tikzfig{yarrow}}= \ket{x}\mapsto 2^{\frac{m-n}{4}}\ket{A\cdot x}$
	\end{center}
	
	Finally, there are two properties that are very useful in practice, the Hadamard gate can flip red arrows and when the matrix $C$ has at most one $1$ in each row then red and yellow arrows are equivalent:
	
	\begin{center}
		\tikzfig{techarrows}
	\end{center}
	
	\section{A dictionary between !-boxes and scalable notations}\label{sec:dico}
	
	Now that the two styles of large-scale diagrammatic reasoning tools have been introduced, we will investigate their respective expressivity and efficiency by considering the translation from one formalism to the other in various cases. Here we will denote $\leftrightarrow$ to express that a scalable and a !-box diagram correspond to the same ZH diagram. A rigorous proof that !-boxes and scalable notations have the same expressiveness would require the development of a framework able to handle parametrized families of diagrams. Some work has been done on this matter~\cite{kissinger2015banggraphs,kissinger2015contextfree,zamdzhiev2019framework,quick2015firstorder,merry2014phd}, but they do not picture !-boxes in a compositional way and often do not encompass overlapping. We do not ambition to do it here. However, we provide translation heuristics that strongly support the conjecture that the expressiveness of the two approaches is indeed similar.
	
	\subsection{Boxes and arrows}\label{ssection:boxesarrows}
	
	The scaled generators have direct corresponding !-box diagrams up to permutations of the inputs and outputs:
	
	\begin{center}
		\tikzfig{scalgenbang}
	\end{center}
	
	An important remark that will be at the core of all our examples is that the matrix arrows are in fact very nicely expressed as !-boxes:
	
	\begin{center}
		\tikzfig{bagarrows}
	\end{center}
	
	Knowing this one can for example easily show that matrix arrows are erased in one line of !-box manipulation:
	
	\begin{center}
		\tikzfig{bangmatcomp}
	\end{center}
	
	The compactness and elegance of this proof must be compared with the lengthy proofs by induction of \cite{carette2019szx}, typical of scalable notations. 
	Very similar proofs can be done for all properties of matrix arrows. We see that !-boxes allow for zooming in arrows to exploit their internal structure. In the other direction, it is now clear the trapezes are just arrows in disguise:
	
	\begin{center}
		\tikzfig{trapezearrow}
	\end{center}
	
	It is particularly striking that this structure appears to have been discovered independently for completely different reasons in \cite{chancellor2016graphical} and \cite{kuijpers2019graphical}. This correspondence between arrows and !-boxes is particularly fruitful in the following simple cases:
	
	\begin{center}
		\tikzfig{bagarrows2}
	\end{center}
	
	Those expressions will be instrumental to reformulate !-box diagrams into similar scalable ones. The general heuristic goes as follows: from scalable to !-boxes, expand the scaled generators and arrows, and from !-boxes to scalable, identify the relevant structures appearing in the dictionary and replace them with arrows.
	
	\subsection{(Hyper)local complementation}
	
	Graph states form a particular family of states defined as follows: starting from a graph on $n$ vertices, we take a tensor product of $n$ $\ket{+}$ states, corresponding to green nodes. Then we apply $C$-$Z$ gates for each edge of the graph two the corresponding nodes. $C$-$Z$ gates are represented as: \tikzfig{czgate}, any graph state can then be represented as: \tikzfig{graphstate}. Where $K_n $ is a $\binom{n}{2}\times n $-matrix defined as $(K_n )_{v,e}= v\in e $, and $G$ is a $\{-1, 1\}$-vector that characterizes the edges in $G$ by $-1$ and the edges not in $G$ by $1$. Graph-states enjoy numerous interesting properties, and the interested reader can consult for example \cite{hein2006entanglement} for more details. Here we will focus on the local complementation rule, that can be expressed as:
	
	\begin{center}
		\tikzfig{loccomp}
	\end{center}
	
	This rule is the core of the following identity. 
	
	\begin{theorem}[Hyper local complementation \cite{lemonnier2021hypergraph}] We have: $\quad\tikzfig{figures/hlc}$
	\end{theorem}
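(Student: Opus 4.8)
The plan is to prove the identity entirely within scalable notation, using the matrix-arrow dictionary of Section \ref{ssection:boxesarrows} to turn the hypergraph state into a composition of arrows and then letting the arrow-manipulation rules do the combinatorial work that the !-box proof of \cite{lemonnier2021hypergraph} carries out by induction. The key enabling observation is that all the bipartite structure of a (hyper)graph state --- the green nodes carrying the $\ket{+}$ inputs, the incidence matrix $K_n$, and the $\{-1,1\}$ edge data --- can be repackaged as red and yellow matrix arrows carrying explicit $\{0,1\}$ matrices, exactly as the correspondence between trapezes and arrows prescribes.

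First, I would rewrite the left-hand side in arrow form and split off the vertex at which complementation is performed using a divider, isolating its wire of type $(1)$ from the thick wire carrying the remaining vertices, so that the local operation can be applied to it on its own. Second, I would push the local operation through the arrows, where three technical facts are decisive: arrows are copied and erased by green nodes, so the connections of the distinguished vertex distribute over its neighborhood; the Hadamard gate flips red arrows, which converts the local Clifford/phase data into a modification of the incident matrix; and arrows compose via the products $BA$ over $\mathbb{F}_2$ and $B\cdot A$ over $\mathbb{B}$, which is what assembles the newly created (hyper)edges out of the neighborhood incidence.

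Applying the local complementation rule stated just above --- which the theorem identifies as its core --- at the distinguished vertex and then recombining the arrows should produce a single arrow whose matrix is the incidence matrix of the complemented hypergraph, up to the well-tempered scalars, which I would track separately using the wire-stripping functor and the normal form. The graph-state part then follows directly from the local complementation rule, so the genuinely new content is that the multiplicative h-box behaviour promotes the clique on the neighborhood of ordinary local complementation to the correct family of toggled hyperedges.

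The step I expect to be the main obstacle is controlling this nonlinear h-box behaviour. In the purely linear graph-state case the operation merely adds the clique on the neighborhood, but for hypergraphs the multiplicative semantics of h-boxes means that complementation toggles an entire family of hyperedges; the delicate point is to show that the arrow composition reproduces exactly this family, with no spurious terms and with the correct scalar normalisation. Establishing that the boolean and $\mathbb{F}_2$ matrix products capture precisely the toggled hyperedges --- and that this scalable computation is genuinely shorter than the inductive argument it replaces --- is where the real content of the proof lies.
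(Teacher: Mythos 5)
Your strategy---recast the hypergraph state as matrix arrows, isolate the complemented vertex, and reduce everything to the local complementation rule plus the arrow copy/flip/composition rules---is essentially the paper's own: the paper handles this result by proving the scalable analogue (Theorem~\ref{thm:scalhypercomp}) in the appendix via exactly such a short diagrammatic arrow computation whose core is the local complementation rule. Note only that this yields the scalable statement, and the paper itself stresses that passing between it and the precise !-box statement (which it cites from prior work) rests on the heuristic arrow/!-box dictionary, the scalable version being ``more general and less precise'', rather than on a rigorous equivalence of the two formalisms.
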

	
	Hyperlocal complementation cannot be fully pictured without
	using dots, even with !-boxes. Dots could be avoided by creating
	some new notation or a new definition for !-boxes, but it would not
	fit naturally in the current diagrammatic language. A similar theorem can be formulated in scalable notations, with the following correspondence:
	$$\tikzfig{figures/hlc-explain}$$
	
	\begin{theorem}[Hyper local complementation, scalable version]\label{thm:scalhypercomp}
		$$\tikzfig{scalhypercomp}$$
	\end{theorem}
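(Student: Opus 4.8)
The plan is to prove the identity by direct manipulation of matrix arrows inside SZH, in the spirit of the translation heuristic of Section~\ref{ssection:boxesarrows}: instead of unfolding to the pointwise ZH statement, I would keep everything scalable so that the whole local-complementation rule collapses to a short sequence of arrow rewrites. To set up, I would read the scalable (hyper)graph state as a layer of green nodes (the $\ket{+}$ qubits) connected through a matrix arrow whose label is the incidence matrix $K_n$, post-composed with scaled h-boxes carrying the phase data $G$. I would then use a divider, of type $(n{+}1)\to(1)+(n)$, to split off the distinguished vertex from the remaining $n-1$ vertices, so that the adjacency data separates into the arrow joining that vertex to the rest and the arrow encoding the edges among the rest.

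The core step is to push the local-complementation correction through this configuration using the copying and co-copying rules together with the arrow composition rules. The combinatorial content of local complementation --- completing the neighbourhood of the chosen vertex, that is, adding an outer-product (rank-one) contribution to the adjacency data --- should emerge as a matrix product on the arrow labels, taken over $\mathbb{F}_2$ for red arrows and over the boolean ring $\mathbb{B}$ for yellow ones. Once the labels are fused, I would collapse yellow arrows to red (or conversely) wherever the resulting matrix has at most one $1$ per row, and apply the Hadamard-flip property, both drawn from the technical arrow identities, to move the correction onto the correct colour of node and land exactly on the right-hand side.

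I expect the main obstacle to be combinatorial bookkeeping rather than anything conceptual. The delicate point is checking that the two flavours of matrix product ($BA$ over $\mathbb{F}_2$ versus $B\cdot A$ over $\mathbb{B}$) really do reproduce the neighbourhood-completion rule in the \emph{hyper}graph setting, where higher-arity h-boxes replace the $C$-$Z$ gates and the rank-one update is no longer purely linear. The well-tempered scalars should largely take care of themselves, since the arrow composition and copying rules are already stated as scalar-exact equalities; nevertheless I would track the well-tempered power-of-two scalars at each fusion as a check. Finally, stripping wires via $|\_|$ and comparing with the already-established non-scalable Hyper local complementation theorem gives an independent confirmation that the scalable identity is correct.
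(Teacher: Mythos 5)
There is a genuine gap, and it sits exactly where you wave it away as ``combinatorial bookkeeping.'' Your plan is to let the neighbourhood-completion content of local complementation ``emerge as a matrix product on the arrow labels'' from the copying/co-copying rules and the arrow composition rules. But those rules are purely structural: they re-express existing bipartite connectivity (composing $B$ and $A$ into $BA$ or $B\cdot A$) and are semantically neutral re-wirings. They can never \emph{create} the new edge or hyperedge structure among the neighbours of the distinguished vertex, which is the entire content of (hyper) local complementation. That content has to be injected by a non-structural ingredient --- in the paper, the proof is a direct SZH diagrammatic derivation whose essential input is the scalable local complementation rule for graph states stated immediately before the theorem (itself resting on the genuine ZH axioms, e.g.\ the multiplicative rules, not on arrow plumbing), with arrow rewrites, Hadamard-flips and yellow/red conversions only doing the bookkeeping around it. Your proposal never clearly invokes that rule as a premise; read literally, it tries to derive it from wiring identities, which fails; read charitably, it applies the rule at the distinguished vertex but then leaves unproved the one thing that matters, namely how the correction interacts with the \emph{hyper}edges (higher-arity h-boxes), which you yourself concede is ``no longer purely linear'' and for which you offer no mechanism.

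A secondary mismatch: you set up the statement with the fixed incidence matrix $K_n$ and phase vector $G$ of a graph state. The scalable theorem is stated for matrix arrows with general Boolean matrices --- the paper explicitly notes the scalable version is \emph{more general} than the !-box one, the graph structure being hidden inside arbitrary indexing matrices --- so your setup would at best establish a special case. Finally, the closing step of stripping wires via $\left|\_\right|$ and comparing with the non-scalable theorem is a sanity check, not a proof step, and cannot substitute for the missing derivation; to repair the argument you should state explicitly that the scalable local complementation rule (and the ZH axioms governing h-boxes) are the inputs, and then exhibit the finite sequence of SZH rewrites --- as the paper's appendix does --- rather than promising that the key identity will fall out of arrow algebra.
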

	
	Here the scalable version is more general and less precise than the !-box version. Here the !-boxes exhibit an internal structure that is hidden inside the matrix arrows.
	
	\subsection{Regular hyper pivot}
	
	The regular hyper pivot is a generalisation of the pivot rule.
	The latter was
	formalized for ZX-calculus \cite{duncan2009graph,pivoting} in order to 
	enhance quantum circuits optimisation. The hyper pivot rule arises from
	applying the same graph-theoretic transformation as pivoting, but in a
	hypergraph.
	\begin{theorem}[Regular hyper pivot \cite{lemonnier2021hypergraph}]\label{th:rhp}
		The following equation holds in ZH-calculus:
		$$\tikzfig{figures/rhp}$$
	\end{theorem}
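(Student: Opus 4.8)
The plan is to apply the dictionary of Section~\ref{sec:dico} in both directions, exactly as in the proof of Theorem~\ref{thm:scalhypercomp}: reformulate the !-box statement as a scalable identity, discharge it by algebraic manipulation of matrix arrows, and read the result back as a !-box equation. First I would isolate the pivot edge $\{u,v\}$ together with its two neighbourhoods and the remaining vertices, and use the correspondence from Section~\ref{ssection:boxesarrows} to replace every !-boxed bipartite block by a matrix arrow whose $\{0,1\}$-matrix records the corresponding bi-adjacency data. The connections from $u$ and from $v$ to their neighbourhoods thereby become a small collection of red and yellow arrows sitting between green and red nodes.

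The heart of the argument is then a purely algebraic rewriting of these arrows. I would push them through the green, red and yellow nodes with the copy and co-copy rules, and then fuse them using the composition rules, products being taken over $\mathbb{F}_2$ for the red arrows and over the boolean ring $\mathbb{B}$ for the yellow ones. Mirroring the classical fact that pivoting along an edge equals three successive local complementations $\tau_u\,\tau_v\,\tau_u$, a clean route is to invoke the scalable hyper local complementation of Theorem~\ref{thm:scalhypercomp} three times and to check that the induced composite on the neighbourhood matrices is precisely the pivot transformation, namely the toggling of the edges between $N(u)\setminus\{v\}$, $N(v)\setminus\{u\}$ and $N(u)\cap N(v)$. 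Along the way the Hadamard-flip identity for red arrows and the red/yellow equivalence for matrices with at most one $1$ per row let me keep every arrow in a normalised colour.

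The step I expect to be the main obstacle is the matrix bookkeeping, and in particular keeping the two semiring structures apart: the pivot toggling is an $\mathbb{F}_2$ operation, whereas the hypergraph incidence data naturally lives in $\mathbb{B}$, so each composition has to be performed in the right ring and the red/yellow conversions applied at exactly the right moments. Equally delicate is matching the local Clifford corrections -- the local-complementation phases left on the neighbourhoods -- on the two sides of the equation, and verifying that the well-tempered scalar prefactors $2^{(m-n)/4}$ attached to the arrows cancel to give the correct global scalar. Once these matrix identities are in place, translating back through the dictionary recovers the stated !-box equation, and, as in Section~\ref{ssection:boxesarrows}, the resulting derivation is considerably shorter than an inductive argument would be.
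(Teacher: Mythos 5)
Your overall strategy---translate the !-box statement into SZH via the dictionary of Section~\ref{sec:dico}, discharge it there by matrix-arrow manipulation, and read the result back---is indeed the paper's philosophy: Theorem~\ref{th:rhp} itself is imported from \cite{lemonnier2021hypergraph}, and what the paper proves is its scalable counterpart. But the step you place at the heart of your argument is a genuine gap. The paper does \emph{not} derive the regular hyper pivot from local complementations: its proof is a short direct rewrite in SZH (pushing the arrows through the spiders with the copy/co-copy rules), whose key equality is supplied by a dedicated lemma on Boolean matrices $A,B$ describing how the two arrows fuse into the H-box interconnection; the paper even remarks that this mirrors the direct !-box proof of \cite{lemonnier2021hypergraph}. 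The identity ``pivot $= \tau_u\tau_v\tau_u$'' that you invoke is a fact about \emph{simple graphs} and ordinary local complementation; no hypergraph analogue is established in this paper, in \cite{lemonnier2021hypergraph}, or anywhere you could cite, and it is far from obvious. Theorem~\ref{thm:scalhypercomp} toggles hyperedges on \emph{subsets of a single neighbourhood} and leaves behind correction structure specific to local complementation, whereas the two spiders in Theorem~\ref{th:rhp} carry no such phases; composing the hyper local complementation three times would produce nests of hyperedges inside $N(u)$ and inside $N(v)$ whose cancellation you would have to prove---a task at least as hard as proving the pivot directly.

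There is also a sign that you are aiming at the wrong statement: ``toggling of the edges between $N(u)\setminus\{v\}$, $N(v)\setminus\{u\}$ and $N(u)\cap N(v)$'' is the conclusion of the \emph{graph} pivot, not of the regular \emph{hyper} pivot. The latter eliminates the two spiders joined by a Hadamard edge at the cost of one new $(-1)$-labelled H-box for \emph{each pair} consisting of an H-box attached to $u$ and an H-box attached to $v$, joining all of their remaining legs---the complete-bipartite structure that overlapping !-boxes, or equivalently the product of the two arrow matrices, encode. So even if your three-complementation route could be repaired, it would be proving a different equation. The bookkeeping issues you flag ($\mathbb{F}_2$ versus $\mathbb{B}$ composition, scalar factors) are real but minor; the missing ingredient is the arrow-fusion lemma that replaces your unproved factorisation.
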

	It is interesting to note that (HS2), (BA1) and (BA2) from Fig.\ref{fig:ZH-rules} are
	particular cases of the regular hyper pivot, as proven in~\cite{lemonnier2021hypergraph}.
	The regular hyper pivot can easily be translated into SZH-calculus, considering the
	correspondance shown in Section~\ref{ssection:boxesarrows}:
	$$\tikzfig{figures/rhp-explain}$$
	It gives the following theorem, proven in SZH-calculus.
	\begin{theorem}[Regular hyper pivot, scalable version]
		The following equation holds in SZH-calculus, for any Boolean matrices $A,B$:
		$$\tikzfig{figures/rhp-scal}$$
	\end{theorem}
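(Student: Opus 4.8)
The plan is to prove the identity directly in SZH-calculus by manipulating matrix arrows, in the concise style promoted in Section~\ref{ssection:boxesarrows}, rather than by an induction on the number of wires as in the original scalable proofs. Both sides of the stated equation consist of a handful of green spiders linked by red and yellow arrows carrying the Boolean matrices $A$ and $B$; the content of the pivot is entirely an assertion about how these two matrices recombine into the composite arrows appearing on the right-hand side. The dictionary of Section~\ref{ssection:boxesarrows} already lets us read the !-box version of Theorem~\ref{th:rhp} as such an arrow configuration, so the task is to transform one configuration into the other using only the arrow rules.

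First I would bring each side to a canonical shape by pushing every arrow onto the green spiders lying between the two halves of the diagram. The two tools for this are the copy and erase rules, which let a green node duplicate an incoming arrow across its outputs, and the composition rules, which fuse two consecutive arrows into a single arrow whose matrix is the product $BA$ taken over $\mathbb{F}_2$ for two red arrows, or $B\cdot A$ taken over $\mathbb{B}$ for two yellow arrows. After this normalisation each side should read as a single green spider decorated by one composite arrow, so that the claimed equation reduces to an equality between two explicit matrices built from $A$ and $B$.

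Second, I would reconcile the colours of the arrows produced on the two sides using the technical properties of Section~\ref{ssection:boxesarrows}: the Hadamard gate flips a red arrow, and when a matrix has at most one $1$ per row its red and yellow arrows coincide. The pivot complements the bipartite connection between the two neighbourhoods, and it is exactly here that the red ($\mathbb{F}_2$) arithmetic genuinely diverges from the yellow ($\mathbb{B}$) arithmetic; the flip rule is what lets a red arrow created by one composition be matched against a yellow arrow created by another.

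The step I expect to be the main obstacle is precisely this arithmetic bookkeeping. I must verify, entry by entry and in the correct ring, that the composite matrix obtained on the left agrees with the one obtained on the right, while also checking that every intermediate arrow is genuinely composable, i.e.\ that the thick-wire types match at each junction. Because (HS2), (BA1) and (BA2) are special cases of this pivot, a good way to steer the computation and to confirm that no well-tempered scalar has been lost is to specialise $A$ and $B$ to the identity and the all-ones matrices and to check that the arrow manipulations degenerate exactly to those three rules; the arrow interpretations $\interp{\tikzfig{rarrow}}$ and $\interp{\tikzfig{yarrow}}$ then furnish an independent semantic check of the final matrix identity.
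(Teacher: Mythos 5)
Your proposal has a genuine gap at its central step. You claim that after pushing arrows through spiders, ``each side should read as a single green spider decorated by one composite arrow, so that the claimed equation reduces to an equality between two explicit matrices built from $A$ and $B$.'' That cannot happen: the two sides of the regular hyper pivot are not diagonal maps decorated by a single arrow. Each side contains scaled H generators sitting between green spiders, and green spiders do not fuse through H-boxes; moreover, the connectivity of that H structure is exactly what changes from left to right (this is the complementation content of the pivot). The arrow rules you invoke --- copying/erasing by green nodes, co-copying by red and yellow nodes, and composition of consecutive arrows over $\mathbb{F}_2$ or $\mathbb{B}$ --- only move or merge arrows; none of them can create or destroy the bipartite H-connection between the two neighbourhoods that is present on one side of the equation and absent on the other. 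Hence the theorem does not reduce to an identity of matrices, and the ``arithmetic bookkeeping'' you flag as the main obstacle is not where the difficulty lies. Your colour-reconciliation step is also unsound for the stated generality: the rule identifying red and yellow arrows requires the matrix to have at most one $1$ per row, which arbitrary Boolean $A$ and $B$ need not satisfy, and the Hadamard flip does not bridge the gap.

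What is missing is the actual engine of the paper's proof: a dedicated lemma, valid for arbitrary Boolean matrices $A,B$, describing how the central arrow-decorated structure rewrites into the composite arrows of the right-hand side. The paper's argument is then a short chain of diagram rewrites --- unfold with the scaled spider rules, apply that lemma for the crucial (second-to-last) equality, and refold --- rather than a normalisation of both sides to a common canonical form. You never formulate such a lemma, and without it your plan stalls at the normalisation step. Finally, your closing sanity check points in the wrong logical direction: (HS2), (BA1) and (BA2) are \emph{consequences} of the hyper pivot, so recovering them by specialising $A$ and $B$ confirms nothing about the general case; a proof must instead use their scaled generalisations (or the lemma above) as ingredients.
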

	The proof is very short:
	$$\tikzfig{figures/rhp-scal-proof}$$
	The second to last equality is given by the following lemma:
	\begin{lemma}
		This equation holds in SZH-calculus, for any Boolean matrices $A,B$:
		$$\tikzfig{figures/rhp-lemma}$$
	\end{lemma}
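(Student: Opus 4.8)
The plan is to prove this identity entirely within the arrow calculus introduced in Section~\ref{sec:large}, avoiding the unfolding into plain ZH and the induction that would accompany it. First I would expand the matrix arrows labelled $A$ and $B$ into their bipartite !-box form via the dictionary of Section~\ref{ssection:boxesarrows}, so that the green spiders carrying the external legs and the precise points at which the two arrows meet become explicit. This reveals which structural rule applies at each junction: copying by a green node, co-copying by a red or yellow node, or the arrow composition rule.

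The core of the argument is the composition behaviour of the arrows. Since two red arrows compose to the $\mathbb{F}_2$-product $BA$ while two yellow arrows compose to the Boolean product $B\cdot A$, I would push one arrow through the intervening spiders so that its label fuses with the other's. The direction of fusion is forced by the colour of the node traversed: a green node duplicates the arrow along the copying rule, whereas a red or yellow node merges the incident legs. Iterating these moves collapses the left-hand side to a single arrow whose label is the matrix expression in $A$ and $B$ demanded by the right-hand side.

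The step I expect to be the main obstacle is reconciling the two matrix products. Because the pivot mixes the $\mathbb{F}_2$-linear structure of the red/green bialgebra with the Boolean structure of the yellow arrows, an intermediate diagram may carry a product computed over one ring that must be re-expressed over the other. The tool that unlocks this is the equivalence between red and yellow arrows valid whenever the matrix has at most one $1$ per row: I would check that the offending factor meets this constraint and recolour it without changing the semantics. Where it does not, I would insert a divider/gatherer pair and use the bialgebra rules (BA1) and (BA2) to split that arrow into pieces that do.

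Finally I would assemble the chain of rewrites---distribute the arrows by copying, fuse their labels by composition, recolour where justified, and absorb the leftover scalars using the well-tempered normalisation---so that the left-hand side reduces to the right-hand side directly. The payoff, in keeping with the \emph{scalable} philosophy of the paper, is that this matrix identity, which would require a bookkeeping induction in unadorned ZH, becomes a short sequence of generator-level rewrites once the bipartite data is packaged inside the arrows.
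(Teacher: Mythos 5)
First, a point of comparison that matters here: the paper never proves this lemma. No proof of it appears in the appendix (the lemma carries no label and none of the appendix proofs refer to it); it is stated on the strength of the remark in Section~\ref{ssection:boxesarrows} that ``very similar proofs can be done for all properties of matrix arrows'' by expanding arrows into their !-box form. Your opening and closing moves --- unfold the arrows $A$ and $B$ into bipartite !-box diagrams, manipulate, re-fold --- are exactly that heuristic, so your overall strategy is the one the authors intend. The proposal must therefore be judged on its own, and as a proof it has genuine gaps rather than being a complete alternative.

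Two gaps are concrete. First, the step you yourself flag as the main obstacle is resolved by a move that would fail: you propose to split an offending arrow, ``via a divider/gatherer pair and the bialgebra rules (BA1) and (BA2),'' into pieces each having at most one $1$ per row, and then recolour each piece using the red/yellow equivalence. But (BA1) and (BA2) are interaction rules between green nodes and red nodes (resp.\ H-boxes); neither they nor any rule stated in the paper licenses decomposing the \emph{label} of a matrix arrow into such pieces. And even granting a decomposition $A = A_1 \oplus A_2$ over $\mathbb{F}_2$ (or $A = A_1 \lor A_2$ over $\mathbb{B}$) into at-most-one-$1$-per-row factors, recolouring the pieces does not recolour the whole: the red and yellow composites differ exactly where rows of the pieces overlap, which is precisely the case you set out to handle, so the trick only relocates the mismatch. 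Second, your core mechanism is head-to-tail arrow composition (red with red giving $BA$, yellow with yellow giving $B\cdot A$), but in the hyper-pivot configuration the arrows $A$ and $B$ are not composed head-to-tail: they meet through the green spiders and H-boxes of the pivot core. The fusion must instead come from the copy/co-copy rules and from pushing arrows through H-boxes (the flip rule of Section~\ref{sec:large}), which is also how transposed factors enter the final label; since you never compute what that label is, nothing in the proposal is ever checked against the right-hand side of the lemma. In short: right opening heuristic, but the middle of the argument --- the part that actually constitutes the proof --- is missing, and the one concrete repair you offer is not a valid rewrite in the calculus.
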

	Besides, this proof is very similar to the one done with !-boxes: there is no
	fundamental difference between !-boxes and scalable notation regarding the regular
	hyper pivot. However, the following section tackles a set of diagrams that seem more
	natural written in SZH-calculus.
	
	\section{Diagrammatic transforms}\label{sec:transform}
	
	Diagrammatic transforms have been introduced in \cite{kuijpers2019graphical} as an efficient way to transform ZX diagrams into ZH diagrams and \textit{vice versa}. In this section, we present a different approach to diagrammatic transforms that allows us to consider them either with !-boxes or with scalable notations.
	
	\subsection{Transforms of boolean functions}
	
	An invertible linear map $D:2^n \to 2^n $ which is diagonal in the computational basis can be described by a boolean function $f: \{0,1\}^n \to \mathbb{C}^*$ such that $D\ket{x}=f(x)\ket{x}$. The pointwise multiplication of boolean functions $(fg)(x)=f(x)g(x)$ then corresponds to the composition of the underlying diagonal linear maps. We will write $f^\lambda (x)= f(x)^\lambda$ for any $\lambda\in \mathbb{Z}$ with the convention that $z^0 =1$. We call \textbf{transform} a group automorphisms of ${\mathbb{C}^*}^{\{0,1\}^n}$ that is, a map $T:\mathbb{C}^{\{0,1\}^n} \to \mathbb{C}^{\{0,1\}^n} $ such that $T(fg)=T(f)T(g)$ and $T(f^\lambda )= T(f)^\lambda $. We will mainly consider two such transforms, the Fourrier and the Mobius transforms:
	
	\begin{lemma}\label{lm:transformsBool}
		The Fourrier  and Mobius transforms $\hat{f}$ and $\tilde{f}$ defined respectively as:
		
		\begin{center}
			$ \hat{f}(s)=\prod\limits_{t} f(t)^{-2^{1-n}(-1)^{s\cdot t}}\quad$ and $\quad \tilde{f}(s)\doteq\prod\limits_{t\leq s} f(t)^{(-1)^{|t|+|s|}}$ 
		\end{center}
		
		satisfies:
		
		\begin{center}
			$f(x)=f(\emptyset)\prod\limits_{s} \check{f}(s)^{s\cdot x }\quad$ and $\quad f(x)=\prod\limits_{s\leq x} \tilde{f}(s)$.
		\end{center}
		
	\end{lemma}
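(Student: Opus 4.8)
The plan is to prove both reconstruction identities by the same two-step strategy: first \emph{linearise} the multiplicative statements by passing to exponents, and then invoke a standard orthogonality/inversion identity on the Boolean cube $\{0,1\}^n$. For the linearisation, since $f$ takes values in $\mathbb{C}^*$ and every expression involved is a finite product of (dyadic-rational) powers of the values $f(t)$, I would work in the free abelian group generated by the formal symbols $\{f(t)\}_{t\in\{0,1\}^n}$ -- equivalently, apply a branch of $\log$ and compare coefficients, the branch being immaterial because we only ever compare the rational exponent attached to each independent $f(t)$. Each identity then reduces to checking, for every fixed $x$, that the total exponent of $f(t)$ on the right-hand side equals $\delta_{t,x}$, after which $\prod_t f(t)^{\delta_{t,x}}=f(x)$ is immediate.

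For the Fourier part, the exponent $s\cdot x$ appearing in $\check f(s)^{\,s\cdot x}$ must be read as the $\mathbb{F}_2$-pairing $\bigoplus_i s_i x_i$, i.e.\ a single parity bit, and then $\check f=\hat f$ is the transform just defined. I would rewrite the indicator of an odd pairing as $[\,s\cdot x\equiv 1\,]=\tfrac12\bigl(1-(-1)^{s\cdot x}\bigr)$, substitute the definition of $\hat f$, and exchange the order of the sum over $s$ and the product over $t$. The total exponent of $f(t)$ coming from the product then becomes $-2^{-n}\sum_{s}\bigl(1-(-1)^{s\cdot x}\bigr)(-1)^{s\cdot t}$, which by the character orthogonality relation $\sum_{s\in\{0,1\}^n}(-1)^{s\cdot u}=2^{n}\delta_{u,0}$ collapses to $\delta_{t,x}-\delta_{t,0}$. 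The prefactor $f(\emptyset)$ contributes exactly the missing $+\delta_{t,0}$, so the combined exponent of $f(t)$ is $\delta_{t,x}$, as required; the normalisation constant $-2^{1-n}$ is precisely what makes the two factors of $2^{n}$ cancel.

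For the Möbius part this is ordinary Möbius inversion on the Boolean lattice. Substituting the definition of $\tilde f$ into $\prod_{s\le x}\tilde f(s)$ and swapping the nested sums, the exponent of $f(t)$ (for $t\le x$) is $\sum_{t\le s\le x}(-1)^{|s|+|t|}$. Writing $s=t\cup r$ with $r\subseteq x\setminus t$ turns this into $\sum_{r\subseteq x\setminus t}(-1)^{|r|}=(1-1)^{|x|-|t|}$, which vanishes unless $t=x$ and equals $1$ when $t=x$; hence the exponent is $\delta_{t,x}$ and the product telescopes to $f(x)$.

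The combinatorics of each inversion is routine, so the delicate point is the bookkeeping of the Fourier normalisation: keeping track of the constant $-2^{1-n}$, the role of the $f(\emptyset)$ prefactor, and especially the modulo-$2$ reading of the exponent $s\cdot x$, so that orthogonality leaves \emph{exactly} $f(x)$ rather than a power of it. I would guard against a slip here by separately sanity-checking the cases $x=\emptyset$ and $x=e_i$, where the surviving terms can be read off directly, before asserting the general coefficient identity $\delta_{t,x}$.
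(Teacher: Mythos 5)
Your proposal is correct and follows essentially the same route as the paper's proof: both substitute the definitions, exchange the product over $s$ with the product over $t$, and reduce each identity to showing that the exponent of $f(t)$ equals $\delta_{t,x}$, using $s\cdot x=\tfrac12\bigl(1-(-1)^{s\cdot x}\bigr)$ together with character orthogonality $\sum_s(-1)^{s\cdot u}=2^n\delta_{u,0}$ for the Fourier case. The only cosmetic difference is in the M\"obius case, where you evaluate $\sum_{t\le s\le x}(-1)^{|s|+|t|}$ via the binomial identity $(1-1)^{|x|-|t|}$, whereas the paper cancels terms by splitting on a coordinate where $t$ and $x$ differ.
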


	Where $|x|$ denotes the Hamming weight of $x$, which is the number of one in it, and $s\cdot t \doteq \bigoplus\limits_{i=0}^{2^n -1} x_i y_i$. One needs to be careful here, the power $2^{1-n}$ is no uniquely defined if $ 1< n$, in fact the decomposition is not unique and choosing any solution leads to a valid decomposition. More precisely for each $t$ we can choose any of the $2^{n-1)} $ solutions of the equation $X^{2^{n-1}}=f(t)$, this gives $2^{2n-1}$ possibilities for the Fourrier transform. However, all of those possibilities satisfies the inversion formula.
	
	\subsection{Transforms of diagram}
	
	The practical interest of those transforms is that they can be interpreted as a way to realize any diagonal map $D: 2^n \to 2^m $ as a composite of elementary gates. Given $s \subseteq \{1,\dots,n\}$ and $\lambda \in \mathbb{C}^* $ we define the \textbf{phase-gadget} $X_{s,\lambda}$ and the \textbf{generalized hyper-edge} $H_{s,\lambda}$ respectively as:
	
	\begin{center}
		$X_{s,\lambda} \ket{x}\doteq \lambda^{s\cdot x} \ket{x} \qquad \interp{\tikzfig{phasegadget}}= X_{s,\lambda}\qquad
		H_{s,\lambda} \ket{x}\doteq \lambda^{\delta_{s\leq x}}\ket{x} \qquad \interp{\tikzfig{hyperedge}}= H_{s,\lambda}$
	\end{center}
	
	We see that a composition of phase-gadgets or hyper-edges can be rewritten using arrows. Thus we have:
	
	\begin{lemma}\label{lm:mobiusFourrier}
		Given a diagonal linear map $D:2^n \to 2^n $ and the corresponding boolean function $f: \{0,1\}^n \to \mathbb{C}^* $ we have:
		
		\begin{center}
			$\interp{\tikzfig{fourrierdiag}}=\interp{\tikzfig{mobiusdiag}}= D$, $\quad\tikzfig{fourrierdiag2}=\tikzfig{mobiusdiag2}\quad$ and \begin{tabular}{l}
				$\tilde{f}(x)=\prod\limits_{t\geq x}  \hat{f}(t)^{-2^{|x|-1}}$ \\
				$\hat{f}(x)=\prod\limits_{t\geq x}  \tilde{f}(t)^{-2^{1-|t|}(-1)^{|x|}}$\\
				$f(\emptyset)=\tilde{f}(\emptyset)= \prod\limits_{t}  \hat{f}(t)^{-\frac{1}{2}}$
			\end{tabular} 
		\end{center}
		
	\end{lemma}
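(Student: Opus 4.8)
The plan is to split the statement into its semantic core and its scalar consequences, anchoring both in the inversion formulas of Lemma~\ref{lm:transformsBool} and in the arrow calculus of Section~\ref{ssection:boxesarrows}. For the first equality I would argue semantically. Since $X_{s,\lambda}\ket{x}=\lambda^{s\cdot x}\ket{x}$, the composite of the phase-gadgets $X_{s,\hat f(s)}$ over all $s\subseteq\{1,\dots,n\}$, together with the global scalar $f(\emptyset)$, sends $\ket{x}$ to $f(\emptyset)\prod_{s}\hat f(s)^{\,s\cdot x}\ket{x}$; the Fourier inversion formula rewrites this as $f(x)\ket{x}=D\ket{x}$. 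Dually, since $H_{s,\lambda}\ket{x}=\lambda^{\delta_{s\leq x}}\ket{x}$, the composite of the hyper-edges $H_{s,\tilde f(s)}$ sends $\ket{x}$ to $\prod_{s\leq x}\tilde f(s)\,\ket{x}$, which the Möbius inversion formula identifies with $f(x)\ket{x}$. Both composites therefore interpret to $D$. Bundling the incidence data of the $s$-indexed gadgets into a single $\{0,1\}$-matrix whose rows are the vectors $s$ presents the Fourier composite as a red arrow and the Möbius composite as a yellow arrow, which is exactly what the two displayed diagrams record.

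For the diagrammatic identity I would stay inside SZH rather than appeal to completeness. The red arrow carrying the data $\hat f$ and the yellow arrow carrying the data $\tilde f$ differ only by the passage from the parity pairing $s\cdot x$ to the boolean pairing $\delta_{s\leq x}$. I would push Hadamard boxes through the red arrow using the flip property, and then apply the rule that a red and a yellow arrow coincide once the governing matrix has at most one $1$ in each row, redistributing the phases accordingly. This turns one arrow presentation into the other and, as a by-product, displays wire by wire how the phase on a given internal edge of the red diagram is reorganised into phases of the yellow diagram; reading off these phases is precisely the content of the scalar relations.

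The three scalar relations I would then confirm by direct substitution and elementary parity sums, which also double-checks the diagrammatic computation. Writing $\tilde f(x)=\prod_{s\leq x}f(s)^{(-1)^{|s|+|x|}}$ and substituting the Fourier inversion $f(s)=f(\emptyset)\prod_{u}\hat f(u)^{\,u\cdot s}$, the exponent of each $\hat f(u)$ becomes $\sum_{s\leq x}(u\cdot s)(-1)^{|s|+|x|}$; expanding $u\cdot s=\tfrac12\big(1-(-1)^{u\cdot s}\big)$ and factorising over coordinates collapses this to $(-2)^{|x|-1}$ when $u\geq x$ and to $0$ otherwise, which yields the first formula, while the $f(\emptyset)$-exponent $\sum_{s\leq x}(-1)^{|s|+|x|}$ vanishes for $x\neq\emptyset$ by $(1-1)^{|x|}=0$. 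Symmetrically, substituting the Möbius inversion $f(r)=\prod_{s\leq r}\tilde f(s)$ into the definition of $\hat f$ and using $\sum_{r\geq s}(-1)^{x\cdot r}=(-1)^{|x|}2^{\,n-|s|}$ when $x\leq s$ and $0$ otherwise gives the exponent $-2^{\,1-|s|}(-1)^{|x|}$, which is the second formula. Finally $\tilde f(\emptyset)=f(\emptyset)$ is immediate from the definition, and $\prod_{t}\hat f(t)^{-1/2}=f(\emptyset)$ follows from the orthogonality relation $\sum_{t}(-1)^{t\cdot r}=2^{\,n}$ for $r=\emptyset$ and $0$ otherwise.

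The step I expect to be delicate is the sign bookkeeping in the first formula: the factor $(-1)^{|s|+|x|}$ of the Möbius transform interacts with the parity sum so that the exponent written $-2^{|x|-1}$ must be understood with the sign inside the base, namely as $(-2)^{|x|-1}$, and keeping this consistent across both even and odd Hamming weights is the only genuinely error-prone point. Everything else is either a semantic rewriting driven by the two inversion formulas of Lemma~\ref{lm:transformsBool} or a routine evaluation of the coordinate-wise parity sums above.
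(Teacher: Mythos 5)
The genuine gap is your second paragraph, the claimed SZH-internal derivation of the diagrammatic identity. The rule you invoke --- that a red and a yellow arrow carrying the same matrix coincide --- has as precondition that the matrix has at most one $1$ in each row, and $\Delta_n$ violates this maximally: its rows are the characteristic vectors of all subsets $s\subseteq\{1,\dots,n\}$, so every row with $|s|\geq 2$ breaks the hypothesis. The Hadamard-flip rule cannot repair this either, since it acts within red arrows and never changes their colour. Passing from the parity pairing $\lambda^{s\cdot x}$ (phase-gadget nest, red arrow) to the boolean pairing $\lambda^{\delta_{s\leq x}}$ (hyper-edge nest, yellow arrow), together with the change of phases from $\hat f$ to $\tilde f$, is precisely the graphical Fourier transform of \cite{kuijpers2019graphical}, a substantive theorem which does not follow from those two technical arrow rules. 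Your ``wire by wire'' reading is moreover inconsistent with your own (correct) scalar relations: $\tilde f(x)$ is a product of powers of $\hat f(t)$ over the whole up-set $\{t : t\geq x\}$, a global recombination, not a per-edge redistribution of phases. The paper closes this step in exactly the way you declined to: having shown that both diagrams denote $D$, it concludes their equality by completeness. Since your first paragraph already establishes the semantic premise, the minimal fix is to accept that completeness step; refusing it means reproving the graphical Fourier transform, a much larger undertaking.

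The rest of your proposal is correct and coincides with the paper's proof: the semantic argument that both composites interpret to $D$ is the same appeal to the two inversion formulas of Lemma~\ref{lm:transformsBool}, and the scalar relations are obtained by the same substitutions and parity sums. One point in your favour: your sign caveat is not mere bookkeeping but an actual correction of the paper's statement. The recursion in the paper's own proof gives
\begin{center}
$\sum\limits_{s\leq x}(-1)^{s\cdot t+|s|+|x|}=(-2)^{|x|}\,\delta_{x\leq t}$,
\end{center}
since each induction step contributes a factor $2\cdot(-1)$, the $-1$ coming from $|1x'|=|x'|+1$; yet the paper concludes $2^{|x|}\delta_{x\leq t}$ and hence states the exponent as $-2^{|x|-1}$. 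The correct exponent is $(-2)^{|x|-1}$, exactly as you say: the case $n=1$, $x=t=1$, where $\tilde f(1)=\hat f(1)=f(1)/f(0)$, confirms that it must equal $+1=(-2)^0$ rather than $-1$.
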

	
	Where the $2^n \times n $-matrix $\Delta_n$ is defined as $(\Delta_n)_{i,s} \df \delta_{i\in s}$. This matrix contains all possible positions for the legs of hyper-edges and phase gadgets. This happens to be very useful in practice when we want to switch between weighted hyper-graph and phase-gadget nests. The matrix $S_n $ here is just a reformulation of the trapezes. What we see here is just a direct reformulation of the original formulation of diagrammatic transform introduced in \cite{kuijpers2019graphical}.
	
	\begin{center}
		\tikzfig{transform}
	\end{center}
	
	\subsection{Spider nest identities}
	
	Spider nest identities have been introduced in \cite{de2020fast} useful rewriting rules to simplify quantum circuits. Formally, a spider nest identity is given by a composition of spider-like diagrams, typically generalized hyper-edges, and phase gadgets, with one big spider and a lot of tiny ones, which equals the identity. Those identities can be derived directly using the diagrammatic transforms. We here consider the special case of \textbf{symmetric boolean functions}, that is, when $f(x)$ only depends on the Hamming weight of $x$. We write $F:\textbf{n}\to \mathbb{R}$ the function such that $f(x)=F(|x|)$. We then have more compact transforms.
	
	\begin{lemma}\label{lm:transformsSymBool}
		The Kravchuk and Binomial transforms $\hat{F}$ and $\tilde{F}$ defined respectively as:
		
		\begin{center}
			$\hat{F}(m)\df \prod\limits_{k=0}^{n}F(k)^{-2^{1-n}K^n_k(m)}$
			and
			$\tilde{F}(m)\df\prod\limits_{k=0}^{m} F(k)^{\binom{m}{k} (-1)^{m-k}}$
		\end{center}
		
		where $K^n_k$ is the \textbf{Kravchuk polynomial}: $K^n_k(X)\df\sum\limits_{j=0}^{k}\binom{X}{j}\binom{n-X}{k-j}(-1)^j$, satisfy:
		
		\begin{center}
			$\hat{f}(x)=\hat{F}(|x|)$, $\quad\tilde{f}(x)=\tilde{F}(|x|)$,  $\quad F(m)=\prod\limits_{k=0}^n \hat{F}(k)^{\frac{-1}{2}K^n_k(m)}\quad$ and
			$\quad F(m)=\prod\limits_{k=0}^{m}\tilde{F}(k)^{\binom{m}{k}}$
		\end{center}
		
	\end{lemma}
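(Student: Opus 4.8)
The plan is to treat the four claimed identities in two groups. The first two, $\hat f(x)=\hat F(|x|)$ and $\tilde f(x)=\tilde F(|x|)$, assert that the general Fourrier and Mobius transforms of Lemma~\ref{lm:transformsBool} collapse to the Kravchuk and Binomial transforms whenever $f$ is symmetric; I would obtain them by regrouping the defining products according to Hamming weight. The last two are the corresponding inversion formulas, which I would obtain by substituting the transforms into the proposed right-hand sides and checking that the total exponent of each $F(l)$ collapses to $\delta_{lm}$.

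For $\hat f(x)=\hat F(|x|)$, substitute $f(t)=F(|t|)$ into $\hat f(s)=\prod_t f(t)^{-2^{1-n}(-1)^{s\cdot t}}$, so that the exponent of $F(k)$ is $-2^{1-n}$ times $\sum_{|t|=k}(-1)^{s\cdot t}$. Fixing $s$ with $|s|=m$ and sorting the $t$ of weight $k$ by their overlap $j=|s\cap t|$, of which there are $\binom{m}{j}\binom{n-m}{k-j}$ each carrying sign $(-1)^j$, this sum is exactly $K^n_k(m)$; hence $\hat f(s)=\hat F(m)$ depends only on $|s|$. The identity $\tilde f(x)=\tilde F(|x|)$ is the same argument but lighter: in $\tilde f(s)=\prod_{t\le s}f(t)^{(-1)^{|t|+|s|}}$ the $t\le s$ of weight $k$ number $\binom{m}{k}$ and each carries sign $(-1)^{m+k}=(-1)^{m-k}$, reproducing $\tilde F(m)$. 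A byproduct worth recording is the character formula $K^n_k(|s|)=\sum_{|t|=k}(-1)^{s\cdot t}$, which I will reuse below.

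For the Binomial inversion $F(m)=\prod_{k=0}^m\tilde F(k)^{\binom mk}$, expanding $\tilde F$ makes the exponent of $F(l)$ equal to $\sum_{k=l}^m\binom mk\binom kl(-1)^{k-l}$, and the standard rearrangement $\binom mk\binom kl=\binom ml\binom{m-l}{k-l}$ turns it into $\binom ml(1-1)^{m-l}=\delta_{lm}$, the classical binomial inversion. For the Kravchuk inversion $F(m)=\prod_k\hat F(k)^{-\frac12 K^n_k(m)}$, the analogous substitution gives exponent $2^{-n}\sum_k K^n_l(k)K^n_k(m)$ of $F(l)$, so the whole statement reduces to the involution identity $\sum_{k=0}^n K^n_l(k)K^n_k(m)=2^n\delta_{lm}$, i.e.\ that the matrix $(K^n_k(m))_{m,k}$ squares to $2^n I$.

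This involution is the one genuinely nontrivial step, but it dissolves using the character formula recorded above. Fixing $s$ with $|s|=m$ and observing that $K^n_l(|t|)=\sum_{|u|=l}(-1)^{t\cdot u}$ depends only on $|t|$, I would rewrite $\sum_k K^n_l(k)K^n_k(m)=\sum_t(-1)^{s\cdot t}\sum_{|u|=l}(-1)^{t\cdot u}=\sum_{|u|=l}\sum_t(-1)^{t\cdot(s\oplus u)}$, using the $\mathbb F_2$ identity $s\cdot t+t\cdot u=t\cdot(s\oplus u)$. The inner sum is the fundamental character orthogonality $\sum_t(-1)^{t\cdot w}=2^n\delta_{w,0}$, which is nonzero only when $u=s$; this forces $l=m$ and leaves $2^n\delta_{lm}$. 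Feeding this back collapses the exponent of each $F(l)$ to $\delta_{lm}$ and yields $F(m)$, completing the Kravchuk inversion. The only care needed throughout is the bookkeeping of the weight-regrouping and of these orthogonality relations.
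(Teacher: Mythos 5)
Your proof is correct, and its first half coincides with the paper's: both you and the paper obtain $\hat f(x)=\hat F(|x|)$ and $\tilde f(x)=\tilde F(|x|)$ by substituting $f(t)=F(|t|)$ and regrouping the defining products by Hamming weight, the overlap count $j=|s\cap t|$ producing exactly the character formula $\sum_{|t|=k}(-1)^{s\cdot t}=K^n_k(|s|)$. Where you genuinely diverge is on the two inversion formulas. The paper never re-proves inversion: it specializes the already-established general formulas of Lemmas \ref{lm:transformsBool} and \ref{lm:mobiusFourrier}, namely $f(x)=f(\emptyset)\prod_s\hat f(s)^{s\cdot x}$, $f(\emptyset)=\prod_t\hat f(t)^{-1/2}$ and $f(x)=\prod_{s\leq x}\tilde f(s)$, to symmetric $f$ by the same weight-regrouping (writing $s\cdot x=\tfrac{1-(-1)^{s\cdot x}}{2}$ so the exponent of $\hat F(k)$ becomes $\tfrac12\left(\binom{n}{k}-K^n_k(m)\right)$, after which the $F(0)$ factor cancels against $\prod_k\hat F(k)^{\frac12\binom{n}{k}}$). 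You instead verify inversion from scratch, by substituting the transform definitions into the proposed right-hand sides and collapsing the exponent of each $F(l)$ to $\delta_{lm}$; this requires the classical binomial inversion $\sum_k\binom{m}{k}\binom{k}{l}(-1)^{k-l}=\delta_{lm}$ and the Kravchuk involution $\sum_k K^n_l(k)K^n_k(m)=2^n\delta_{lm}$, which you establish cleanly by character orthogonality, $\sum_t(-1)^{t\cdot(s\oplus u)}=2^n\delta_{u=s}$. Both routes are sound. The paper's is shorter and keeps all newly introduced exponents integral or already handled by the general lemmas; yours is more self-contained (it never invokes the $f(\emptyset)$ formula from Lemma \ref{lm:mobiusFourrier}) and isolates the Kravchuk orthogonality as a reusable standalone fact, at the price of redoing work and of freely using $(z^a)^b=z^{ab}$ for fractional exponents of complex values, where the branch ambiguity the paper flags after Lemma \ref{lm:transformsBool} would merit the same caveat it receives there.
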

	
	See \cite{canteaut2005symmetric} for details on transforms of symmetric boolean functions. Note the $\frac{1}{2^n}$ in exponent for the Kravchuk transform, this means that the previous remark on the uniqueness of the Fourier transform also applies here. Direct computations using those transforms allow to derive the two following spider-nest identities:
	
	\begin{lemma}\label{lm:nest} We have:
			$\quad\tikzfig{mobiusgadget}\quad$ and $\quad\tikzfig{spidernest}$\\
		
		where $\omega=e^{\frac{i\pi}{4}}$ and $\hat{s}$ is a symmetric boolean function defined as:
		
		\begin{center}
			$\hat{S}(0)=1$ $\hat{S}(1)=e^{\frac{i\pi(n-2)(n-3)}{16}}$ $\hat{S}(2)=e^{\frac{-i\pi(n-3)}{8}}$ $\hat{S}(3)=e^{\frac{i\pi}{8}}$ $\hat{S}(n)=e^{\frac{-i\pi}{8}}$  $\hat{S}(k)=1$ for $ k\neq 0,1,2,3,n$.
		\end{center}
		
	\end{lemma}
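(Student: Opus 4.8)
The plan is to reduce both equations to an identity of diagonal maps and then verify that identity at the level of \emph{symmetric} boolean functions, using the transform machinery of Lemma~\ref{lm:mobiusFourrier} and Lemma~\ref{lm:transformsSymBool}. Every generator occurring on either side---the phase gadgets $X_{s,\lambda}$ and the generalized hyper-edges $H_{s,\lambda}$---is diagonal in the computational basis, and the matrix-arrow/trapeze encodings collapse an entire nest of them into a single diagonal map $D$, determined by its boolean function $f$. Hence it suffices to show that the two sides of each equation carry the same $f$; and since all phases depend only on $|s|$, I would work throughout with the Hamming-weight reductions $F,\hat F,\tilde F$ rather than $f,\hat f,\tilde f$, invoking $\hat f(x)=\hat F(|x|)$ and $\tilde f(x)=\tilde F(|x|)$ from Lemma~\ref{lm:transformsSymBool}.

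For the first identity (\texttt{mobiusgadget}) I would read it as the symmetric specialization of the transform square of Lemma~\ref{lm:mobiusFourrier}: the Mobius/hyper-edge realization of a diagonal map and its Fourier/phase-gadget realization must agree because both equal $D$. Concretely, starting from the symmetric function $F$ read off the hyper-edge side (whose subset weights are the Binomial transform $\tilde F(|s|)$), I would compute the Kravchuk transform $\hat F(m)=\prod_{k=0}^{n}F(k)^{-2^{1-n}K^n_k(m)}$, use $\hat F(|s|)$ as the phase-gadget weights, and check via the inversion formula $F(m)=\prod_{k=0}^{n}\hat F(k)^{-\frac12 K^n_k(m)}$ that the original $F$ is recovered. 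The only diagrammatic point left is that the two scalable pictures really implement the nests indexed by $\Delta_n$ and $S_n$, which is exactly the content of the transform picture drawn just before this subsection.

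For the spider nest itself (\texttt{spidernest}) the method is identical but run in a single direction: I would fix the simple ``big-spider'' side, extract its symmetric function $S$, compute $\hat S(m)=\prod_{k}S(k)^{-2^{1-n}K^n_k(m)}$ explicitly, and verify that it matches the tabulated values. Since $\hat S(k)=1$ except for $k\in\{0,1,2,3,n\}$, the computation reduces to evaluating the Kravchuk polynomials $K^n_k(m)=\sum_{j=0}^{k}\binom{m}{j}\binom{n-m}{k-j}(-1)^j$ at those few indices---using in particular $K^n_1(m)=n-2m$ and $K^n_n(m)=(-1)^m$---and collecting the resulting $\pi/4$-phases. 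Once $\hat S$ is matched, the inversion formula of Lemma~\ref{lm:transformsSymBool} certifies that the phase-gadget nest with weights $\hat S(|s|)$ reproduces $S$, which is precisely the asserted identity.

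The main obstacle is this last explicit evaluation. Extracting the quadratic-in-$n$ exponent of $\hat S(1)=e^{i\pi(n-2)(n-3)/16}$, together with the half-integer eighth-root phases $e^{\pm i\pi/8}=\omega^{\pm 1/2}$, out of $\prod_k S(k)^{-2^{1-n}K^n_k(m)}$ forces the $2^{1-n}$ factor to be absorbed against the subset multiplicities $\binom{n}{k}$ of each weight class, and the half-integer powers are exactly the branch ambiguity flagged after Lemma~\ref{lm:transformsSymBool}. I would therefore fix one consistent choice of $2^{n-1}$-th root at the outset and check that the inversion formula is insensitive to it. The delicate bookkeeping is to confirm that the leftover phases telescope down to precisely the five stated constants, with every generic weight $k\neq 0,1,2,3,n$ contributing a trivial factor rather than a spurious $n$-dependence; verifying this cancellation is what makes the spider-nest identity true and is the step I expect to require the most care.
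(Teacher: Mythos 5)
Your high-level framing (reduce everything to diagonal maps and work with symmetric functions via Lemma~\ref{lm:transformsSymBool}) agrees with the paper, but the direction in which you run the transforms is backwards, and this is not cosmetic: it breaks both verifications. For the spider nest, you propose to take the ``simple'' side, compute its Kravchuk transform by the formula, and check that it \emph{matches the tabulated values}. It will not. Fourier-type representations are highly non-unique (this is exactly the warning issued after Lemma~\ref{lm:transformsBool}): the transform formula applied to a constant function yields weights supported on $k=0$ only (since $\sum_{k}K^n_k(m)=2^{n-m}\delta_{m=0}$), and applied to a single big phase gadget it yields weights supported on $\{0,n\}$, whereas the tabulated $\hat S$ has non-trivial entries at $k=1,2,3$. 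So your comparison comes out negative even though the identity is true, and a mismatch of representations proves nothing. The paper runs the argument the other way: it feeds the \emph{given} weights $\hat S$ into the inversion formula $S(m)=\prod_{k}\hat S(k)^{-\frac{1}{2}K^n_k(m)}$ --- which is legitimate for a nest with arbitrary weights, because a gadget nest with weights $\hat S(|s|)$ always implements $c\cdot\prod_{k}\hat S(k)^{-\frac{1}{2}K^n_k(|x|)}$ for some global scalar $c$ (write $s\cdot x=\tfrac{1-(-1)^{s\cdot x}}{2}$ and use $\sum_{|s|=k}(-1)^{s\cdot x}=K^n_k(|x|)$) --- and then checks that the resulting $S$ is \emph{constant}. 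That check is the real work of the proof: it needs the explicit polynomials $K^n_k$ for $k=0,1,2,3,n$, reduces to the exponent $E(m)=\frac{m^3}{12}-\frac{3m^2}{8}+\frac{5m}{12}+\frac{(-1)^m}{16}$, and is settled by writing $m=24k+l$ and checking $E(l)\equiv\frac{1}{16}\bmod 2$ for $l\in\{0,\dots,23\}$. None of this appears in your plan.

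The first identity has the same problem in miniature: ``check via the inversion formula that the original $F$ is recovered'' is a tautology --- inversion always recovers $F$ by Lemma~\ref{lm:transformsSymBool} --- so it certifies nothing about the actual left-hand side, which is a \emph{single} phase gadget, not a nest carrying your computed weights $\hat F(|s|)$. The paper needs no Kravchuk step here at all: it computes the Binomial (Mobius) transform of the gadget's own function $G(m)=\omega^{(1-(-1)^m)/2}$, which is exact since the exponents $\binom{m}{k}(-1)^{m-k}$ are integers (no branch ambiguity), obtaining $\tilde G(m)=\omega^{\delta_{m\neq 0}(-2)^{m-1}}$; since $(-2)^{m-1}\equiv 0 \bmod 8$ for $m\geq 4$, only $\tilde G(1)=\omega$, $\tilde G(2)=-i$, $\tilde G(3)=-1$ survive, and the Mobius realization of Lemma~\ref{lm:mobiusFourrier} gives the decomposition immediately. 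Your instinct that the $2^{n-1}$-th root ambiguity needs care is correct, but the conclusion you should draw from it is the opposite of the one you drew: precisely because of that ambiguity, one must argue by inversion from the stated weights (or by an ambiguity-free Binomial transform), never by computing a forward Fourier/Kravchuk transform and comparing it to a given representation.
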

	
	The first one has been derived using a proof by induction in \cite{munsonand}, the second one introduced in \cite{de2020fast} was first derived using the diagrammatic transform from \cite{} and successfully used to simplify quantum circuits. The method used to derive those equalities open the way to the automatic mining of spider-nest identities.
	
	\subsection{Fourier hyper pivot}
	
	The Fourier hyper pivot is a diagrammatic equation inspired by one of
	the rewrite rules created by Amy~\cite{amy2019verification}. Those rewrite
	rules are not between diagrams but algebraic expressions, nevertheless both 
	frameworks carry the same expressiveness, as stated in~\cite{vilmart2021sop,lemonnier2021hypergraph}. Amy generalized some quantum circuit equations to perform efficient verification; and the rule which came out of the Hadamard boxes cancellation is very successful on that matter. The Fourier hyper pivot is the direct translation of the latter. This diagrammatic equation is, in addition, a relevant example of the use of the graphical Fourier transform~\cite{kuijpers2019graphical} and it motivates the use of !-boxes
	instead of dots in ZH-calculus.
	
	\begin{theorem}[Fourier hyper pivot~\cite{lemonnier2021hypergraph}]
		The following equation holds for any $n,m\in\N$, any family
		of complex numbers $\lambda$ and where $\vert b\vert$ denotes
		the weight of $b$:
		$$\tikzfig{figures/fhp-inside}$$
	\end{theorem}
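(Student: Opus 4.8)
The plan is to derive the Fourier hyper pivot from the diagrammatic Fourier transform of Lemma~\ref{lm:mobiusFourrier}, using the translation heuristic of Section~\ref{ssection:boxesarrows}. First I would pass from the !-box presentation to scalable notation, recognising the bipartite families of H-boxes as matrix arrows via the trapeze--arrow correspondence and reading off the diagonal map $D:2^n\to 2^m$ that the left-hand side computes. The conceptual content of the statement is then that the two sides are merely two decompositions of the same $D$: the left-hand side expresses it through a nest of generalized hyper-edges $H_{s,\lambda}$, while the right-hand side expresses it through a nest of phase gadgets $X_{s,\mu}$ whose phases $\mu$ are exactly the Fourier transform $\hat\lambda$ of the family $\lambda$.

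With both sides in scalable form, the heart of the proof is a single application of Lemma~\ref{lm:mobiusFourrier}: since the phase-gadget nest and the hyper-edge nest both interpret to the same $D$, the corresponding diagrams are equal provided the phases on one side are the Fourier transform of those on the other. The appearance of the weight $|b|$ is dictated by the explicit Fourier coefficients of Lemma~\ref{lm:transformsBool}, where the kernel $(-1)^{s\cdot t}$ and the normalisation $2^{1-n}$ produce weight-dependent scalars. I would package the combinatorial connectivity of the legs using the matrix $\Delta_n$ introduced after Lemma~\ref{lm:mobiusFourrier}, so that the transform acts on a single matrix arrow and the whole identity reduces to an arrow-composition computation governed by the rules of Section~\ref{ssection:boxesarrows}, in direct analogy with the scalable proof of the regular hyper pivot (Theorem~\ref{th:rhp}).

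The main obstacle will be bookkeeping rather than anything conceptual, and I expect two delicate points. First, the Fourier transform is not unique---as noted after Lemma~\ref{lm:transformsBool} there are $2^{2n-1}$ admissible choices of $\hat\lambda$---so I must check that the branch implicit in the right-hand diagram genuinely satisfies the inversion formula and hence yields a valid decomposition of $D$; fortunately the same remark guarantees that any choice works, which should make this step routine once it is set up carefully. Second, the well-tempered normalisation forces me to track scalars precisely: the factors $2^{(m+n-2)/4}$ from the green spiders and $2^{-(m+n)/4}$ from the H-boxes must combine to the global constant appearing in the statement, and this is where most of the routine computation will live. Once these are settled I would translate the scalable equation back into the !-box form of the statement; as the surrounding discussion stresses, the !-box presentation is the natural one here, since writing the Fourier sum out directly would require the informal dots that the scalable arrows and !-boxes are designed to replace.
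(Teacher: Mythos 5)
Your plan follows the same route as the paper: the paper also establishes this statement by passing to scalable notation (Theorem \ref{thm:fhp}), reducing the pivot part to an arrow computation in the style of Theorem \ref{th:rhp}, and then exchanging two transform-decompositions of a diagonal map via Lemma \ref{lm:mobiusFourrier}, with the subset-indexed legs packaged by the matrix $\Delta_{nm}$ and the global scalar absorbed into $\mu_\emptyset$.

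There is, however, one genuine error to fix before your plan produces the stated coefficients: you run the transform in the wrong direction, and consequently you attribute the weight-dependence on $|b|$ to the wrong formula. In the paper's proof the family $\lambda$ plays the role of the \emph{Fourier} (phase-gadget) data: one sets $\Lambda(s)=\lambda_i$ when $s=\overrightarrow{\{i\}}$ and $\Lambda(s)=1$ otherwise, and the coefficients of the nest on the other side are the \emph{Mobius} (hyper-edge) data, obtained from the mixed formula of Lemma \ref{lm:mobiusFourrier}, namely $\tilde f(x)=\prod_{t\geq x}\hat f(t)^{-2^{|x|-1}}$, which yields $\mu_s=\lambda_i^{-2^{|s|-1}}$ for $\emptyset\neq s\leq\overrightarrow{\{i\}}$ and $\mu_\emptyset=\prod_i\lambda_i^{-1/2}$. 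It is this mixed Mobius-from-Fourier relation, not the plain Fourier coefficients of Lemma \ref{lm:transformsBool}, that produces weight-dependent exponents: the plain Fourier kernel gives exponents $-2^{1-n}(-1)^{s\cdot t}$, whose magnitude $2^{1-n}$ is the same for every $s$ and does not depend on its weight at all. So if you literally computed $\hat\lambda$ as you propose, every spider in the nest would carry an exponent of uniform magnitude $2^{1-nm}$, and you could never match the $\lambda^{-2^{|b|-1}}$-type phases appearing in the statement. Once you swap the roles (the plain-$\lambda$ side is the Fourier side, the nest side is the Mobius side) and use the correct formula, the rest of your plan --- the care about the multiple branches of the Fourier transform, the well-tempered scalar bookkeeping, and the translation back to !-boxes --- goes through exactly as in the paper.
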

	
	The regular hyper pivot is a particular case of this equation. It can be seen
	rather easily by replacing all the $\lambda_i$ with $-1$. SZH-calculus seems more adapted to picture the Fourier hyper pivot: a family of trapezes is directly an arrow, and we will see that it is a generalisation of Theorem~\ref{th:rhp}.
	
	\begin{theorem}[Fourier hyper pivot, scalable version]\label{thm:fhp}
		The following equation holds in SZH-calculus, for any Boolean
		matrices $A,B$ and complex vector $\fl\lambda$:
		$$\tikzfig{fhp-scal}$$
		where $C= \Delta_{nm}\cdot \overrightarrow{I_n} \cdot A$, $D=\Delta_{nm}\cdot \overrightarrow{I_m} \cdot B$ and $\mu_{\emptyset}=\prod\limits_{i} \lambda_i^{\frac{-1}{2}}$, $\mu_s= \begin{cases}
			\lambda_i^{-2^{|s|-1}} \text{ if } s\leq \overrightarrow{\{i\}}\\
			1 \text{ else}
		\end{cases}$ where $s\in \{0,1\}^{nm}$, and $\overrightarrow{\{i\}}$ is the binary words of $m$ copies of the binary word of size $n$ with only one $1$ at position $i$.
	\end{theorem}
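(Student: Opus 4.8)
The plan is to treat Theorem~\ref{thm:fhp} as the phase-bearing upgrade of the regular hyper pivot (Theorem~\ref{th:rhp}), which is recovered by specialising $\overrightarrow\lambda$ to the all-$(-1)$ vector. I would therefore follow the skeleton of the short scalable proof of Theorem~\ref{th:rhp}, and at each step where the regular proof uses an ordinary bipartite graph structure, replace it by the weighted hyper-edge structure carried by the $\overrightarrow\lambda$, pushing the diagrammatic transform of Section~\ref{sec:transform} through the matrix arrows. The whole point of passing to SZH-calculus is that a family of trapezes is literally a matrix arrow (Section~\ref{ssection:boxesarrows}), so the hypergraph data on the two sides is encoded by the arrows $A$ and $B$, and the transform only has to act on the scalar phases.

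First I would expand the left-hand side using the dictionary of Section~\ref{ssection:boxesarrows}, turning the trapezes into arrows and exposing the diagonal map attached to the central green nodes as a nest of generalised hyper-edges $H_{s,\lambda}$ (resp.\ phase gadgets $X_{s,\lambda}$). The key move is then to apply Lemma~\ref{lm:mobiusFourrier}: the same diagonal map admits both a Fourier description through phase gadgets and a Mobius description through hyper-edges, and switching between them is exactly the graphical realisation of the Hadamard-box cancellation underlying Amy's rule. Concretely, this rewrites the nest indexed by the singleton data $\overrightarrow{\{i\}}$ with weights $\lambda_i$ into a nest indexed by all of $\{0,1\}^{nm}$ with the new weights $\mu_s$, the matrix $\Delta_{nm}$ supplying precisely the enumeration of all admissible leg positions.

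The remaining work is bookkeeping of two independent kinds. On the matrix side I would contract $\Delta_{nm}$ with the replicated identities $\overrightarrow{I_n},\overrightarrow{I_m}$ and with the original arrows $A,B$, using only the boolean matrix-arrow composition rules of Section~\ref{sec:large}, to read off $C=\Delta_{nm}\cdot\overrightarrow{I_n}\cdot A$ and $D=\Delta_{nm}\cdot\overrightarrow{I_m}\cdot B$. On the scalar side I would specialise the inversion formula $\tilde f(x)=\prod_{t\geq x}\hat f(t)^{-2^{|x|-1}}$ of Lemma~\ref{lm:mobiusFourrier} to Fourier data supported on the singletons $\overrightarrow{\{i\}}$: for such data only the term $t=\overrightarrow{\{i\}}$ with $\overrightarrow{\{i\}}\geq s$ survives, which forces $\mu_s=\lambda_i^{-2^{|s|-1}}$ exactly when $s\leq\overrightarrow{\{i\}}$ and $\mu_s=1$ otherwise, while the clause $f(\emptyset)=\prod_t\hat f(t)^{-\frac12}$ of the same lemma yields the global scalar $\mu_\emptyset=\prod_i\lambda_i^{-1/2}$.

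The main obstacle I anticipate is this phase bookkeeping rather than the diagrammatic surgery: one must check that the specialisation of the Fourier/Mobius inversion to the singleton supports $\overrightarrow{\{i\}}$ produces \emph{precisely} the stated $\mu_s$, with the support condition $s\leq\overrightarrow{\{i\}}$ selecting a unique contributing index, and one must confirm that the non-uniqueness of the Fourier transform flagged after Lemma~\ref{lm:transformsBool} (the ambiguity in the $2^{\,n-1}$-th roots) is resolved consistently, so that the displayed $\mu$ is a legitimate solution of the inversion formula. Once the coefficients are verified, I would re-gather the transformed phase gadgets back into arrows through the dictionary and close the computation with the matrix-arrow copy rules of Section~\ref{sec:large}, obtaining the right-hand side; setting every $\lambda_i=-1$ collapses the $\mu_s$ to the purely quadratic data and recovers Theorem~\ref{th:rhp}, which serves as a sanity check on the whole derivation.
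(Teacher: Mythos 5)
Your proposal takes essentially the same route as the paper's proof: the paper also introduces the singleton-supported Fourier data, defining $\Lambda(s)=\lambda_i$ if $s=\overrightarrow{\{i\}}$ and $1$ otherwise, reads off $\mu_s=\prod\limits_{t\geq s}\Lambda(t)^{-2^{|s|-1}}$ from the inversion formula of Lemma~\ref{lm:mobiusFourrier}, and closes with the diagrammatic transform computation in which the arrow compositions produce $C$ and $D$. The only cosmetic difference is that the paper's single formula for $\mu_s$ subsumes your separate handling of $\mu_\emptyset$ (the case $s=\emptyset$ lies below every $\overrightarrow{\{i\}}$, giving $\prod_i\lambda_i^{-1/2}$ automatically).
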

	
	One can see here that the whole complexity of the expressions is almost hidden and lies in the matrices and the vectors that index the diagrams. 
	
	\section{Conclusion}
	
	We have studied the interplay of !-boxes and scalable notations in many examples of large-scale graphical reasoning. It appears that the indexed trapeze box of \cite{kuijpers2019graphical} and \cite{lemonnier2021hypergraph} to present diagrammatic transforms exactly correspond to the arrows of scalable notations. Thus, all reasoning involving matrix arrows in scalable notations, for example in \cite{carette2021quantum}, can be straightforwardly carried out with !-boxes. Using this correspondence, we express graphical transforms in a very compact way using matrix arrows, leading to shorter proofs than the ones given in \cite{kuijpers2019graphical}. This new point of view allowed us to provide direct derivations of the spider nest rules of \cite{de2020fast,munsonand}. Using a hypergraph inspired representation of graph-sates we also dramatically simplified the diagrammatical representation of local complementation from \cite{carette2019szx}. Turning !-boxes into scalable notations happens to be not so challenging given the proper heuristics. In general, finding a scalable diagram with the right shape is easy. The difficulty is in finding the adequate matrices to label the arrows. We experimented with this on the Fourier Hyper pivot of \cite{lemonnier2021hypergraph}. Relying on our new formulation of diagrammatic transforms and local complementation, we managed to provide alternative shorter proofs for most results from \cite{lemonnier2021hypergraph}. This translation exercise made it clear that what scalable notations win in compactness, they lose in faithfulness in exhibiting the internal structure of the diagram. While !-boxes, on the contrary, provide an interesting account of those medium-scale topological structures. To sum up, we conclude by emphasizing three key messages to take from those investigations:
	
	\begin{itemize}
		
		\item Large-scale diagrammatic reasoning tools provide powerful rewriting techniques to tidy up cumbersome diagrammatical proofs by exhibiting the relevant topological structure involved and abstracting away diagrammatical noise. Those tools form a bridge between low and higher-level reasoning on quantum processes.
		
		\item Given the translation heuristics we presented, it seems highly plausible that !-boxes and scalable notations have the same expressiveness. However, a proper formulation and proof of this conjecture requires developing a framework for parametrized families of diagrams able to encompass both models. Such a framework is also the condition for a hypothetical safe hybrid usage of both tools at once.
		
		\item The essential difference between the two frameworks is then at the level of practical efficiency. When to prefer one tool over the other is mainly a question of scale and coarse-graining. The !-boxes have the ability to exhibit the medium-scale topological structure of diagrams, while the scalable notations tend to encapsulate everything into very compact matrix arrows, the medium-scale topological structure being hidden in the indexing matrices.
		
	\end{itemize}
	
	\bibliographystyle{eptcs}
	\bibliography{scal}
	
	\appendix
	
	\section{Proofs}
	
	\begin{proof}[Proof of Theorem \ref{thm:scalhypercomp}]$\quad$\\
		
		\begin{center}
			\tikzfig{scalhypercomp2}
		\end{center}
		
	\end{proof}
	
	\begin{proof}[Proof of Lemma \ref{lm:transformsBool}]$\quad$\\
		
		First for the Fourier transform:\\
		
		\begin{center}
			$f(\emptyset)\prod\limits_{s} \hat{f}(s)^{s\cdot x }=f(\emptyset)\prod\limits_{s} \left(\prod\limits_{t} f(t)^{-2^{1-n}(-1)^{s\cdot t}}\right)^{s\cdot x }
			= f(\emptyset)\prod\limits_{t,s} f(t)^{-(s\cdot x)2^{1-n} (-1)^{s\cdot t}}
			= f(\emptyset)\prod\limits_{t}  f(t)^{-2^{1-n}\sum\limits_{s} (s\cdot x) (-1)^{s\cdot t}}$
		\end{center}
		
		We evaluate $2^{1-n}\sum\limits_{s} (s\cdot x) (-1)^{s\cdot t}$. Using $s\cdot x= \frac{1-(-1)^{s\cdot x}}{2}$ and $\sum\limits_{t} (-1)^{s\cdot t}=2^n \delta_{s= \emptyset}$ we have:
		
		\begin{center}
			$2^{1-n}\sum\limits_{s} (s\cdot x) (-1)^{s\cdot t}=2^{-n}\sum\limits_{s} (-1)^{s\cdot t} -2^{-n}\sum\limits_{s} (-1)^{s \cdot (x\oplus t)}=\delta_{t= \emptyset} -\delta_{x=t}$
		\end{center}
		
		So:
		
		\begin{center}
			$f(\emptyset)\prod\limits_{t} \hat{f}(t)^{s\cdot t }=f(\emptyset)\prod\limits_{t}  f(t)^{\delta_{x=t} -\delta_{t=\emptyset}}
			= f(\emptyset)f(x)f(\emptyset)^{-1}
			= f(x)$
		\end{center}

		For the Mobius transform:\\
		
		\begin{center}
			$\prod\limits_{s\leq x} \tilde{f}(s)= \prod\limits_{s\leq x} \prod\limits_{t\leq s} f(t)^{(-1)^{|t|+|s|}}
			= \prod\limits_{t\leq x} \quad \prod\limits_{t\leq s\leq x} f(t)^{(-1)^{|t|+|s|}}
			= \prod\limits_{t\leq x} f(t)^{\sum\limits_{t\leq s\leq x}(-1)^{|t|+|s|}}$
		\end{center}
		
		We evaluate $\sum\limits_{t\leq s\leq x}(-1)^{|t|+|s|}$. If $t=x$ then $ \sum\limits_{t\leq s\leq x}(-1)^{|t|+|s|}=1 $, else there is a position $i$ with $t_i =0$ and $x_i =1 $, without loss of generality we assume that $i=0$.
		
		\begin{center}
			$\sum\limits_{t\leq s\leq x}(-1)^{|t|+|s|}= \sum\limits_{t\leq 0s'\leq x}(-1)^{|t|+|s'|} + \sum\limits_{t\leq 1s'\leq x}(-1)^{|t|+1+|s'|} = 0$
		\end{center}
		
		So $\sum\limits_{t\leq s\leq x}(-1)^{|t|+|s|} = \delta_{t=x}$ and we have:
		
		\begin{center}
			$\prod\limits_{s\leq x} \tilde{f}(s)= \prod\limits_{t\leq x} f(t)^{\sum\limits_{t\leq s\leq x}(-1)^{|t|+|s|}} 
			= \prod\limits_{t\leq x} f(t)^{\delta_{t=x}}
			= f(x)$
		\end{center}
		
	\end{proof}
	
	\begin{proof}[Proof of Lemma \ref{lm:mobiusFourrier}]$\quad$\\
		
		The matrix $\Delta_n $ allows to compose all possible hyper-edges or phase-gadgets. The corresponding parameters $\lambda$s are given in the $H$-boxes. We then obtain diagonal gates that are compositions of elementary phase-gadgets and hyper-edges. This exactly corresponds to the inversion formula for the two transforms, hence the interpretation as the diagonal gate $D$. The equality of the two diagrams then follows by completeness. Now, for the formulas:\\
		
		\begin{center}
			$\tilde{f}(x)=\prod\limits_{s\leq x} f(s)^{(-1)^{|s|+|x|}} 
			= \prod\limits_{s\leq x} \left(f(\emptyset)\prod\limits_{t} \hat{f}(t)^{s\cdot t }\right)^{(-1)^{|s|+|x|}}
			= f(\emptyset)^{\sum\limits_{s\leq x} (-1)^{|s|+|x|}}\prod\limits_{t}  \hat{f}(t)^{\sum\limits_{s\leq x} (s\cdot t)(-1)^{|s|+|x|}}$
		\end{center}
		
		We evaluate $\sum\limits_{s\leq x} (-1)^{|s|+|x|}$. If $x=\emptyset$ then $\sum\limits_{s\leq x} (-1)^{|s|+|x|}= 1 $, else there is a position $i$ with $x_i =1$, without loss of generality we assume that $i=0$. Then:
		
		\begin{center}
			$\sum\limits_{s\leq x} (-1)^{|s|+|x|}=\sum\limits_{0s'\leq 1x'} (-1)^{|0s'|+|x|}+\sum\limits_{1s'\leq 1x'} (-1)^{|1s'|+|x|}= \sum\limits_{s'\leq x'} (-1)^{|s'|+|x|}-\sum\limits_{s'\leq x'} (-1)^{|s'|+|x|}=0$
		\end{center}
		
		Thus we have: $\sum\limits_{s\leq x} (-1)^{|s|+|x|}=\delta_{x=\emptyset}$.
		
		We now evaluate $\sum\limits_{s\leq x} (s\cdot t)(-1)^{|s|+|x|}=\frac{1}{2}\sum\limits_{s\leq x} (-1)^{|s|+|x|} - \frac{1}{2} \sum\limits_{s\leq x}(-1)^{s\cdot t + |s|+|x|}=\frac{1}{2}\delta_{x=\emptyset}-\frac{1}{2}\sum\limits_{s\leq x}(-1)^{s\cdot t + |s|+|x|}$.
		
		It remains to compute $\sum\limits_{s\leq x}(-1)^{s\cdot t + |s|+|x|}$. First if $x=\emptyset$ we get $ 1$. If $x\neq \emptyset$ and we don't have $x\leq t$ then there is a position $i$ such that $t_i = 0$ and $x_i =1 $, assuming without loss of generality that $i=0$ we get: 
		
		\begin{align*}
			\sum\limits_{s\leq x}(-1)^{s\cdot t + |s|+|x|}&= \sum\limits_{0s'\leq 1x'}(-1)^{(0s')\cdot (0t') + |0s'|+|1x'|}+\sum\limits_{1s'\leq 1x'}(-1)^{(1s')\cdot (0t') + |1s'|+|1x'|}\\
			&= \sum\limits_{s'\leq x'}(-1)^{s'\cdot t' + |s'|+|1x'|}-\sum\limits_{s'\leq x'}(-1)^{s'\cdot t' + |s'|+|1x'|}\\ &=0
		\end{align*}
		
		Else if we assume, without loss of generality, that $t_0 = 1$ and $x_0 =1$ we get:
		
		\begin{align*}
			\sum\limits_{s\leq x}(-1)^{s\cdot t + |s|+|x|}&= \sum\limits_{0s'\leq 1x'}(-1)^{(0s')\cdot (1t') + |0s'|+|1x'|}+\sum\limits_{1s'\leq 1x'}(-1)^{(1s')\cdot (1t') + |1s'|+|1x'|}\\
			&=2\sum\limits_{s'\leq x'}(-1)^{s'\cdot t' + |s'|+|1x'|}
		\end{align*}
		
		Then, a quick induction gives us the final result $\sum\limits_{s\leq x}(-1)^{s\cdot t + |s|+|x|}= 2^{|x|}\delta_{x\leq t} $.
		
		So finally:
		
		\begin{center}
			$\tilde{f}(x)= f(\emptyset)^{\sum\limits_{s\leq x} (-1)^{|s|+|x|}}\prod\limits_{t}  \hat{f}(t)^{\sum\limits_{s\leq x} (s\cdot t)(-1)^{|s|+|x|}}
			= f(\emptyset)^{\delta_{x=\emptyset}}\prod\limits_{t}  \hat{f}(t)^{\frac{1}{2}\delta_{x=\emptyset}}\prod\limits_{t\geq x}  \hat{f}(t)^{-2^{|x|-1}}
			=\prod\limits_{t\geq x}  \hat{f}(t)^{-2^{|x|-1}}$
		\end{center}
		
		Since $\prod\limits_{t}  \hat{f}(t)^{\frac{1}{2}}=\prod\limits_{t}  \prod\limits_{x} f(x)^{-2^{-n}(-1)^{t\cdot x}}= \prod\limits_{x} f(x)^{-2^{-n}\sum\limits_{t} (-1)^{t\cdot x}}=\prod\limits_{x} f(x)^{-\delta_{0=x}}=f(\emptyset)^{-1}$.\\
		
		Now starting with the expression of the Fourrier transform:
		
		\begin{center}
			$\hat{f}(x)= \prod\limits_{t} f(t)^{-2^{1-n}(-1)^{x\cdot t}}
			= \prod\limits_{t}  \left(\prod\limits_{s\leq t}\tilde{f}(s)\right)^{-2^{1-n}(-1)^{x\cdot t}}
			= \prod\limits_{s}\tilde{f}(s)^{-2^{1-n}\sum\limits_{t\geq s}(-1)^{x\cdot t}} $
		\end{center}
		
		We evaluate $\sum\limits_{t\geq s}(-1)^{x\cdot t}$, if $x \geq t$ then:
		
		\begin{center}
			$\sum\limits_{t\geq s}(-1)^{x\cdot t}= (-1)^{|x\cap s|}\sum\limits_{t' }(-1)^{(x\cap \overline{s})\cdot t'}= (-1)^{|x\cap s|} 2^{n-|s|}\delta_{x\cap \overline{s}=\emptyset}= (-1)^{|x|} 2^{n-|s|}\delta_{x\leq s}$
		\end{center}
		
		Finally: $\hat{f}(x)= \prod\limits_{s}\tilde{f}(s)^{-(-1)^{|x|} 2^{1-|s|}\delta_{x\leq s}} = \prod\limits_{s\geq x}\tilde{f}(s)^{-2^{1-|s|}(-1)^{|x|}}$.
		
	\end{proof}
	
	\begin{proof}[Proof of Lemma \ref{lm:transformsSymBool}]$\quad$\\
		
		We start by computing the Fourrier transform:
		
		\begin{center}
			$\hat{f}(s)=\prod\limits_{t} f(t)^{-2^{1-n}(-1)^{s\cdot t}}=\prod\limits_{k=0}^n F(k)^{-2^{1-n}\sum\limits_{|s|=k}(-1)^{s\cdot t}}$.
		\end{center}
		
		By considering $j$, the number of ones in common between $x$ and $s$ we get:
		
		\begin{center}
			$\sum\limits_{|s|=k}(-1)^{s\cdot x}=\sum\limits_{j=0}^{k}\binom{|x|}{j}\binom{n-|x|}{k-j}(-1)^j = K^n_k(|x|)$.
		\end{center}
		
		For the inversion formula : $F(m)=F(0)\prod\limits_{k=0}^n \hat{F}(k)^{\frac{1}{2}\left(\binom{n}{k}-\sum\limits_{|s|=k}(-1)^{s\cdot t}\right)}=F(0)\prod\limits_{k=0}^n \hat{F}(k)^{\frac{1}{2}\left(\binom{n}{k}-K^n_k(m)\right)}$.
		
		Then using $F(0)^{-1}=\prod\limits_{k=0}^n \hat{F}(k)^{\frac{1}{2}\binom{n}{k}}$, we have: $F(m)=\prod\limits_{k=0}^n \hat{F}(k)^{\frac{-1}{2}K^n_k(m)}$.
		
		Now, for the Mobius transform, we compute: $\tilde{f}(x)=\prod\limits_{s\leq x} f(s)^{(-1)^{|s|+ |x|}}=\prod\limits_{k=0}^{|x|}F(k)^{\binom{|x|}{k}(-1)^{|x|-k}}$.
		
		And for the inversion formula, we get : $F(m)=\prod\limits_{s\leq x}\tilde{F}^{\binom{m}{k}}$.
		
	\end{proof}
	
	\begin{proof}[Proof of \ref{lm:nest}]
		
		We start with the identity :
		
		\begin{center}
			\tikzfig{mobiusgadget}
		\end{center}
		
		Here we show that it can be derived by a direct application of the Mobius diagrammatical transforms. To do so, we compute the Mobius transform of a phase gadget acting on $n$ qubits with parameter $\lambda$. The associated semi-boolean function, $X_{n,\lambda}: x \mapsto \lambda^{\frac{1-(-1)^{|x|}}{2}} $, is obviously symmetric.
		
		The binomial transform of $G: m \mapsto \lambda^{\frac{1-(-1)^{m}}{2}}$ is then:
		
		\begin{align*}
			\tilde{G}(m)&=\prod\limits_{k=0}^{m} G(k)^{\binom{m}{k} (-1)^{m-k}}=\prod\limits_{k=0}^{m} \lambda^{\frac{1-(-1)^{k}}{2} \binom{m}{k} (-1)^{m-k}}\\
			&=\lambda^{\frac{1}{2}\left(\sum\limits_{k=0}^{m} \binom{m}{k} (-1)^{m-k} - \sum\limits_{k=0}^{m} \binom{m}{k} (-1)^{m-k}(-1)^{k}\right)}=\lambda^{\frac{\delta_{m=0}-(-2)^{m}}{2}}=\lambda^{\delta_{m\neq0}(-2)^{m-1}}.
		\end{align*}
		
		Then, setting $\lambda= e^{\frac{i\pi}{4}}$ we get $\tilde{G}(0)=1$, so we have no floating scalars and also $\tilde{G}(m)=1$ for $m\geq 4$. The only non trivial terms are $\tilde{G}(1)=e^{\frac{i\pi}{4}}$, $\tilde{G}(2)=-i$ and $\tilde{G}(3)=-1$, directly giving the expected decomposition.\\
		
		We can compute the transform of the phase function $H(m)\df \beta\delta_{m=n}$ which corresponds to the generalized hyper-edges on $n$ qubits with phase $\beta$ :
		
		\begin{center}
			$\hat{H}(m)=\prod\limits_{k=0}^{n}H(k)^{-2^{1-n}K_k^n(m)}=\beta^{-2^{1-n}K_n^n(m)}=\beta^{-2^{1-n}(-1)^{m}}$.
		\end{center}
		
		Combining this result with the previous spider-nest identity, it is possible to derive the spider-nest identity from \cite{de2020fast} as it is done in \cite{munsonand}. However, we will here provide an alternative proof by inversion of the second identity.
		
		We first sketch the general method to check such spider-nest identity. We want to show that for a symmetric phase function $\hat{s}:\textbf{2}^n \to \mathbb{R}$:
		
		\begin{center}
			$\tikzfig{invk0}=\tikzfig{invk1}$
		\end{center}
		
		We know $\hat{S}$ by reading the coefficients in the phase gadgets. We compute $S$ using the inversion formula. Then we check if $S$ is constant. If this is the case, this means that the corresponding phase gadget is $S(0)I_n$. But this scalar is exactly the one appearing in the graphical Fourier transform. So simplifying on both sides gives us exactly what we want.
		
		We apply this method to our specific case. Here, only the Kravchuk polynomials for $k=0,1,2,3$ and $n$ are needed:
		
		\begin{center}
			\begin{tabular}{cccc}
				$K^n_0(m)=1$&
				$K^n_1(m)=-2m+n$&
				$K^n_2(m)=2m^2 -2nm+\frac{n^2 -n}{2}$&$K^n_n(m)=(-1)^m$\\
				&&&\\
				\multicolumn{4}{c}{$K^n_3(m)=-\frac{4}{3}m^3 +2nm^2  +(-n^2 +n -\frac{2}{3})m+\frac{n^3 - 3n^2 + 2n}{6}$}
			\end{tabular}
		\end{center}
		
		We want to inverse the phase function:
		
		\begin{center}
			$\hat{S}(0)=1$ $\hat{S}(1)=e^{\frac{i\pi(n-2)(n-3)}{16}}$ $\hat{S}(2)=e^{\frac{-i\pi(n-3)}{8}}$ $\hat{S}(3)=e^{\frac{i\pi}{8}}$ $\hat{S}(n)=e^{\frac{-i\pi}{8}}$  $\hat{S}(k)=1$ for $ k\neq 0,1,2,3,n$.
		\end{center}
		
		and show that forall $m\in \interp{0,n}$, $S(m)= S(0)\mod 2$. The inversion formula gives us:
		
		\begin{center}
			\begin{align*}
				S(m)&=\prod\limits_{k=0}^{n}\hat{S}(k)^{\frac{-1}{2}K_k^n(m)}\\
				&=e^{i\pi\left(-\frac{(n-2)(n-3)}{32}K_1^n(m)\frac{n-3}{16}K_2^n(m)-\frac{1}{16}K_3^n(m)+\frac{1}{16}K_n^n(m)\right)}\\
				&=e^{i\pi\left(\frac{m^3}{12} -\frac{3m^2}{8} +\frac{5m}{12}- \frac{n^3}{96}+\frac{n^2}{16}-\frac{11n}{96} +\frac{(-1)^m}{16}\right)}
			\end{align*}
			
		\end{center}
		
		Our goal is to check that $S$ is constant. Thus, we only need the part of the exponent that depends on $m$.
		
		\begin{center}
			$E(m)\df\frac{m^3}{12} -\frac{3m^2}{8} +\frac{5m}{12}+\frac{(-1)^m}{16}$.
		\end{center} 
		
		Since $E(0)=\frac{1}{16} \mod 2$, we want to check that for each $m\in \mathbb{N}$, $E(m)\equiv \frac{1}{16} \mod 2$. To do so we write $m=24k+l$ with $k\in \mathbb{N}$ and $l\in \interp{0,23}$.
		We obtain:
		
		\begin{center}
			$E(24k+l)=1152k^3 + 144lk^2 - 216k^2 + 6kl^2 - 18kl +10k+ \frac{l^3}{12} -\frac{3l^2}{8}+\frac{5l}{12}+\frac{(-1)^{l}}{16}$
		\end{center}
		
		We see that $E(24k+l) = \frac{l^3}{12} -\frac{3l^2}{8}+\frac{5l}{12}+\frac{(-1)^{l}}{16} \mod 2$, this only depends on $l$. Thus we can just check that for each $l\in \interp{0,23}$, $\frac{l^3}{12} -\frac{3l^2}{8}+\frac{5l}{12}+\frac{(-1)^{l}}{16}= \frac{1}{16} \mod 2$  (which is true).
		
	\end{proof}
	
	\begin{proof}[Proof of Theorem \ref{thm:fhp}]
		
		We define $\Lambda(s)= \begin{cases}
			\lambda_i \text{ if } \exists i,\quad s=\overrightarrow{\{i\}}	\\
			1 \text{ else}
		\end{cases}$, we then have: $\mu_s = \prod\limits_{t\geq s}  \Lambda(t)^{-2^{|s|-1}}$.
		
		\begin{center}
			\tikzfig{hyperpivot}
		\end{center}
		
	\end{proof}

\end{document}